\newtheorem{theorem}{Theorem}
\newtheorem{corollary}[theorem]{Corollary}
\newtheorem{remark}[theorem]{Remark}
\newtheorem{definition}[theorem]{Definition}
\title{\LARGE \bf
A Moving Target Approach for Identifying Malicious Sensors in Control Systems
}
\author{Sean Weerakkody~~~~Bruno Sinopoli
\thanks{S Weerakkody, and B. Sinopoli are with the Department of Electrical and Computer Engineering, Carnegie Mellon University, Pittsburgh, PA, USA 15213. Email: {\tt\small sweerakk@andrew.cmu.edu, brunos@ece.cmu.edu.}}
\thanks{S. Weerakkody is supported in part by the Department of Defense (DoD) through the National Defense Science \& Engineering Graduate Fellowship (NDSEG) Program. The work by S. Weerakkody and B. Sinopoli is supported in part by the Department of Energy under Award Number DE-OE0000779 and by the National Science Foundation under Award Number 1646526”.}}
\begin{document} \maketitle
\begin{abstract}
In this paper, we consider the problem of attack identification in cyber-physical systems (CPS). Attack identification is often critical for the recovery and performance of a CPS that is targeted by malicious entities, allowing defenders to construct algorithms which bypass harmful nodes.  Previous work has characterized limitations in the perfect identification of adversarial attacks on deterministic LTI systems. For instance, a system must remain observable after removing any $2q$ sensors to only identify $q$ attacks. However, the ability for an attacker to create an unidentifiable attack requires knowledge of the system model. In this paper, we aim to limit the adversary's knowledge of the system model with the goal of accurately identifying all sensor attacks. Such a scheme will allow systems to withstand larger attacks or system operators to allocate fewer sensing devices to a control system while maintaining security. We explore how changing the dynamics of the system as a function of time allows us to actively identify malicious/faulty sensors in a control system. We discuss the design of time varying system matrices to meet this goal and evaluate performance in deterministic and stochastic systems.
\end{abstract}
\section{Introduction}
Cyber-physical system (CPS) security has become a widely studied topic, both within industry and the research community. Indeed, current and next generation cyber-physical systems will pervade our critical infrastructures including the electricity grid, water distribution systems, waste management, smart buildings, and health care, providing ample motivation for attackers to target CPS. Moreover, CPS will consist of many heterogeneous systems and components, thus providing adversaries an opportunity for attacks \cite{Cardenas:2008ke}. There have been several powerful attacks against CPS including Stuxnet \cite{Langner2013} and the Maroochy Shire incident \cite{Slay2008}.

Consequently, significant research effort has been geared towards responding to attacks on CPS, with a focus on integrity attacks where an adversary alters a subset of control inputs and sensor measurements. Previous research has characterized stealthy attack scenarios including zero dynamic attacks \cite{teixeira2012, PasqualettiJournal}, false data injection  attacks \cite{liu2009,moscs10security}, covert attacks \cite{Smith2011}, and replay attacks \cite{Mo2009R}, where an adversary adversely affects a system without being detected. Moreover, countermeasures have been developed to detect stealthy attacks \cite{teixeira2012pt2, Mo2014, Weerakkody2014, Miao2014, yuan2015security}.

While detection allows a defender to determine the presence of an adversary, it does not prescribe a solution to maintain system performance. Thus, we aim to identify malicious entities in the network so specialized and optimal solutions can be developed to counter their presence. 

Previous work has considered the problem of identification in CPS. For instance, Pasqualetti et al. \cite{PasqualettiJournal} define the notion attack identifiability, provide algebraic conditions to characterize when a system can identify $q$ attacks on sensors and actuators, and propose identification filters to carry out this process. Also, Sundaram et al. \cite{sundaram2010wireless} design an intrusion detection system in a wireless control network to identify malicious behavior at sensor nodes and provide graphical conditions to determine when identification is possible.

Attack identification is closely tied to robust estimation in the presence of sensor attacks \cite{fawzi2011secure, fawzi2014secure, chong2015observability, shoukry2013event, mishra2015secure, nakahira2015dynamic, pajic2014robustness}. Qualitatively, by identifying sensor attacks, a defender can perform state estimation by disregarding malicious sensors and carrying out estimation schemes on trusted sensors. In fact, in deterministic systems, the ability to perform perfect state estimation in the presence of sensor attacks is equivalent to being able to perform perfect identification \cite{fawzi2011secure}.

The problem of robust state estimation in deterministic systems is considered in \cite{fawzi2011secure, fawzi2014secure, shoukry2013event, chong2015observability}. For instance, in \cite{fawzi2011secure} (and \cite{fawzi2014secure}) the authors characterize the number of attacked sensors (and inputs) which can be tolerated while performing state estimation and propose a decoding algorithm to recover the state. Next, Shoukry et al. \cite{shoukry2013event} propose event based algorithms to improve the efficiency of robust state estimation. Additionally, Chong et al. \cite{chong2015observability} develop schemes for robust estimation in deterministic continuous LTI systems while formulating a notion of observability under attack.

The problem of robust estimation has also been considered in stochastic and uncertain systems \cite{mishra2015secure, nakahira2015dynamic, pajic2014robustness}. For instance, Mishra et al. \cite{mishra2015secure} propose robust estimation schemes in the presence of Gaussian noise and characterize limitations in estimation performance in the presence of attacks. Nakahira et al. \cite{nakahira2015dynamic} consider robust estimation with bounded noise, proposing a stable estimator in the presence of $q$ attacks provided that the system remains detectable after removing any $2q$ sensors. Finally, Pajic et al. \cite{pajic2014robustness} demonstrate the robustness of estimation schemes in the presence of attacks even when there is uncertainty in the system model.

From previous work \cite{PasqualettiJournal,fawzi2014secure,chong2015observability, shoukry2013event} it can be shown that there are fundamental limitations in the defender's ability to perform identification. For instance, given the presence of an intelligent adversary, perfect identification of $q$ attacks can only be performed if the system is observable after removing any $2q$ sensors. However, such intelligent adversaries require knowledge of the system model. By limiting the attacker's knowledge of the model, we hope to identify all sensor attacks. This will allow us to perform perfect estimation in a deterministic system under $q$ attacks if the system is observable after removing any $q$ sensors. To meet this objective, we propose changing the dynamics of the system model in a time varying fashion, unknown to the adversary. A similar scheme is considered for the detection of integrity attacks in  \cite{weerakkody2015detecting}. The time varying dynamics act like a moving target, preventing an information constrained adversary from developing stealthy unidentifiable attacks. In this paper, 
\begin{itemize}
\item We propose design considerations for the moving target, which allow us to perfectly identify a class of nonzero sensor attacks in deterministic systems. 
\item We show the moving target allows us to detect destabilizing sensor attacks in stochastic systems.
\item We construct an observer which allows us to identify sensor attacks that cause unbounded estimation error. 
\end{itemize}

The rest of the paper is summarized as follows. In section \ref{problem setup}, we introduce the problem and results on identification. In section \ref{Moving Target}, we propose a moving target based formulation to prevent unidentifiable attacks. Next, in section \ref{Design}, we consider deterministic identification and propose design considerations for the moving target. After, in section \ref{Stochastic}, we consider the moving target approach for detecting false data injection attacks. In section \ref{Stochastic2}, we construct a robust estimator which can only be destabilized by identifiable sensor attacks. 
 Section \ref{Conclusion} concludes the paper. \\
\textit{Notation:} Unless otherwise specified, the following notation is used. $x_{t_1:t_2}$ refers to the set $\{ {x}_{t_1}, x_{t_1+1}, \cdots, x_{t_2} \}$. $X_{uv}$ is the entry at row $u$ and column $v$ of matrix $X$. $X^i$ is the $i$th row of $X$. If $V$ is a set of indices, $X^V$ are the rows of $X$ indexed by $V$. $\mathbf{1}$ is the indicator function. $\mathbb{P}(\cdot)$ refers to the power set. $\|\cdot\|$ is the $l_2$ norm. $\{a_k\}$ defines a sequence.

\section{Problem Setup and Previous Results} \label{problem setup}
In this section we introduce our system model. To begin we model our system as a discrete time deterministic control system under sensor attacks. Though this initial approach is simple, we consider the stochastic case later in the paper. The dynamics are given by
\begin{align}
x_{k+1} &= Ax_k + B(u_k(y_{0:k})),~~y_k = Cx_k + D d_k^a. 
\end{align}
where $x_k \in \mathbb{R}^n$ is the state at time $k$, $u_k(y_{0:k}) \in \mathbb{R}^p$ is the control input, and $y_k \in \mathbb{R}^m$ are the sensor outputs. The sensor outputs, $y_k$, consist of $m$ scalar sensor outputs, defined by the set $S = \{1,2,\cdots,m\}$. 

The adversary performs sensor integrity attack on an ordered set $K = \{s_1, s_2, \cdots, s_{|K|}\} \subseteq S$  using additive inputs $d_k^a \in \mathbb{R}^{|K|}$, starting at time $k = 0$. Consequently, we define $D$ entrywise as
\begin{equation}
D_{uv} = \mathbf{1}_{u = s_i, v = i}.
\end{equation}
Note that $D$ is fully determined by the set $K$. Implicitly, we assume that the set of sensors which the adversary targets is constant due to (ideally) the inherent difficulty in the task of hijacking sensors. When performing an integrity attack, the adversary's goal is to adversely affect the physical system by preventing proper feedback. In particular, a defender with incorrect sensor measurements may not be able to perform adequate state estimation and thus will not be able to apply appropriate corrective measures to the system. 

We assume that the defender knows the system dynamics $\mathcal{M} = \{A,B,C\}$ as well as the input and output histories given by $u_{0:k}$ and $y_{0:k}$, but is unaware of the set $K$. Furthermore, we assume that in the deterministic setup, the defender is unaware of the initial state $x_0$. Thus, the defender's information $\mathcal{I}_k$ at time $k$ is given by
\begin{equation}
\mathcal{I}_k = \{\mathcal{M},u_{0:k-1},y_{0:k}\}.
\end{equation}
\begin{remark}
In the deterministic case, we explore attacks where the defender has no knowledge of the initial state. While this is certainly not realistic, the attack vectors developed in this scenario can still remain stealthy in a practical stochastic setting if the adversary carefully ensures that his initial attack inputs remain hidden by the noise of the system.
\end{remark}

From a defender's perspective it is important to identify trusted sensor nodes. Estimation and control algorithms can then be tuned to ignore attacked nodes. We note that the problem of identifying malicious nodes is independent of the control input, since the defender is aware of the model and input history. Thus, in the ensuing discussions we will disregard the control input so that
\begin{align}
x_{k+1} &= Ax_k, ~~y_k = Cx_k + D d_k^a \label{eq:output}. 
\end{align}
\subsection{Previous Results in Identification}
Previous work in CPS security has attempted to address the problem of identifiability. We now revisit major results when trying to identify malicious adversaries. 

To begin, let $y(x_0,D(K) d_k^a, k)$ be the output signal associated with \eqref{eq:output} at time $k$ with initial state $x_0$ and attack input $D d_k^a$ on sensors $K$. We define an attack monitor $\Psi: y_{0:\infty} \rightarrow \mathbb{P}(\{1,\cdots,m\})$ with the property that $\Psi = K$ if and only if $K$ is the smallest set such that for all $k \ge 0$ we have $y_k = y(x_0^*,D(K) d_k^a,k)$ for some initial state $x_0^*$.
\begin{definition}
Consider an attack $\{D(K) d_k^a\}$ on the sensors in $K$, where we assume each sensor in $K$ is attacked at least once. Then $\{D(K)d_k^a\}$ is unidentifiable if $\Psi( y_{0:\infty}) \neq K$. Moreover, the set $K$ is unidentifiable if there exists an unidentifiable attack \cite{PasqualettiJournal}.
\end{definition}
Based on this definition we state an attack is unidentifiable if there exists an attack on a different subset of sensors with size less than or equal to $|K|$, which yields the same sensor outputs. As a result, we have the following \cite{PasqualettiJournal}.
\begin{theorem}
An attack on sensors $K_1$ is unidentifiable if and only if there exists $x_0^1$ and $x_0^2$ and set $K_2$ such that \label{attackinput}
\begin{equation}
y(x_0^1,D(K_1)d_k^a,k) = y(x_0^2,D(K_2) \bar{d}_k^a,k), ~~~ \forall ~k.
\end{equation}
where $|K_2| \le |K_1|$ and $K_2 \neq K_1$.
\end{theorem}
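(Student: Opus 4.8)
The plan is to prove the biconditional by unpacking the definition of the attack monitor $\Psi$ and the notion of unidentifiability, exploiting the fact that in the control-free dynamics the output has the closed form $y(x_0, D(K)d_k^a, k) = CA^k x_0 + D(K) d_k^a$. The central reformulation I would establish first is the link between unidentifiability and the non-uniqueness of a minimal explaining set. Call a set $K$ \emph{consistent} with an output sequence $y_{0:\infty}$ if there exist $x_0^*$ and an attack $\{\bar d_k^a\}$ with $y_k = CA^k x_0^* + D(K)\bar d_k^a$ for all $k \ge 0$. Then, directly from the definition of $\Psi$, the equality $\Psi(y_{0:\infty}) = K_1$ holds exactly when $K_1$ is the \emph{unique} consistent set of minimal cardinality, so $\Psi(y_{0:\infty}) \neq K_1$ is equivalent to $K_1$ failing to be the unique smallest consistent set.

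For the backward direction, suppose $x_0^1, x_0^2, K_2$ and the attack inputs are given with $|K_2| \le |K_1|$, $K_2 \neq K_1$, and matching outputs; let $y_{0:\infty}$ denote their common value. Both $K_1$ (via $x_0^1$) and $K_2$ (via $x_0^2$) are then consistent with $y_{0:\infty}$. I would split into the case $|K_2| < |K_1|$, where $K_1$ is immediately not of minimal cardinality, and the case $|K_2| = |K_1|$, where either a strictly smaller consistent set exists or the minimal cardinality is attained by the two distinct sets $K_1$ and $K_2$. In every case $K_1$ is not the unique smallest consistent set, so $\Psi(y_{0:\infty}) \neq K_1$, the attack $\{D(K_1)d_k^a\}$ is unidentifiable, and hence the set $K_1$ is unidentifiable.

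For the forward direction, suppose the set $K_1$ is unidentifiable, i.e. some attack $\{D(K_1)d_k^a\}$ attacking each sensor of $K_1$ at least once, starting from $x_0^1$, produces an output $y_{0:\infty}$ with $\Psi(y_{0:\infty}) \neq K_1$. Since $K_1$ is itself consistent with $y_{0:\infty}$ but is not the unique smallest consistent set, I would let $c^* = \min\{|K| : K \text{ consistent with } y_{0:\infty}\} \le |K_1|$ and argue: if $c^* < |K_1|$ any minimal-cardinality consistent set $K_2$ satisfies $|K_2| < |K_1|$ and $K_2 \neq K_1$, while if $c^* = |K_1|$ the failure of uniqueness forces a second consistent set $K_2 \neq K_1$ with $|K_2| = |K_1|$. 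Consistency of $K_2$ then supplies $x_0^2$ and $\bar d_k^a$ realizing the same output, which is exactly the claimed equality.

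The routine ingredients are the closed form of the output and the observation that every consistent set furnishes an explicit $(x_0^*,\bar d_k^a)$ matching the trajectory. The main obstacle, and the point I would treat most carefully, is the tie case $|K_2| = |K_1|$: I must argue from the precise ``unique smallest set'' wording of $\Psi$ that two distinct minimal-cardinality explanations already force $\Psi \neq K_1$, so that equal-cardinality ambiguity, and not only strictly smaller explanations, counts toward unidentifiability. I would also make explicit that the standing hypothesis that every sensor in $K_1$ is attacked at least once is what rules out the degenerate situation in which $K_1$ could be trivially shrunk, ensuring the characterization is nonvacuous.
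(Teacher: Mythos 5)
Your proposal is correct. Note that the paper itself gives no proof of this theorem: it is imported from Pasqualetti et al.\ with only the one\--sentence gloss ``an attack is unidentifiable if there exists an attack on a different subset of sensors with size less than or equal to $|K|$ which yields the same sensor outputs,'' and your argument is precisely the careful definitional unpacking of that gloss --- identifying $\Psi(y_{0:\infty}) = K_1$ with $K_1$ being the unique minimal consistent set, and handling both the strictly\--smaller case and the equal\--cardinality tie. The only point worth making explicit in a write\--up is the one you already flag: the reading of ``the smallest set'' in the definition of $\Psi$ as requiring uniqueness at minimal cardinality, which is what makes the $|K_2| = |K_1|$ case count toward unidentifiability.
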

In this setup, it is important to include the condition that $|K_2| \le |K_1|$. Without this restriction, an attacker can generate an unidentifiable attack regardless of the set $K$. In particular, suppose
$D(K)d_k^a = C^K A^k \Delta x_0$. Then, 
\begin{equation}
y(x_0, D(K)d_k^a ,k) = y(x_0 + \Delta x_0 ,D(K^c) \bar{d}_k^a,k), ~~~ \forall ~k.
\end{equation}
where $D(K^c) \bar{d}_k^a = -C^{K^c} A^k \Delta x_0$. We observe that there exists an attack on sensors $K$ equivalent to an alternative attack on the complement set $K^c$. In such a scenario, it is easy to see a defender can identify attacks on no more than half the sensors. In fact, we have the following result \cite{PasqualettiJournal}.
\begin{theorem}
Attack set $K_1$ is unidentifiable if and only if 
\begin{equation}
\begin{bmatrix} \lambda I - A & 0 & 0 \\ C & D(K_1) & D(K_2) \end{bmatrix} \begin{bmatrix} x \\ d_1 \\ d_2 \end{bmatrix} = \begin{bmatrix} 0 \\ 0 \end{bmatrix},
\end{equation}
where $K_2$ satisfies $|K_2| \le |K_1|$ and $K_2 \neq K_1$, $\lambda \in \mathbb{C}$, $x \neq 0 \in \mathbb{C}^n$ and $d_1 \in \mathbb{C}^{|K_1|}$ and $d_2 \in \mathbb{C}^{|K_2|}$.
\end{theorem}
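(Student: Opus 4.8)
The plan is to derive this algebraic (PBH--type) characterization directly from Theorem \ref{attackinput}, which already reduces unidentifiability of $K_1$ to the existence of two initial states and an alternative set $K_2$ producing identical outputs. Recalling that the attack--free dynamics are autonomous, $x_k = A^k x_0$ and $y_k = C A^k x_0 + D(K)d_k^a$, so the output--matching condition of Theorem \ref{attackinput} becomes, after subtracting the two signals and writing $\xi = x_0^1 - x_0^2$,
\begin{equation}
C A^k \xi + D(K_1) d_k^a - D(K_2)\bar d_k^a = 0, \qquad \forall\, k \ge 0.
\end{equation}
Thus the theorem amounts to showing that the existence of such an all--time output--nulling trajectory with $\xi\neq 0$ is equivalent to the existence of a single mode, i.e.\ a $\lambda$, a nonzero $x$, and inputs $d_1,d_2$ lying in the kernel of the stated pencil (the sign of $\bar d_k^a$ being absorbed into $d_2$, and the two zero blocks reflecting that neither attack drives the state).

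For the ($\Leftarrow$) direction I would argue constructively. Given $(\lambda,x,d_1,d_2)$ in the kernel, $Ax=\lambda x$ gives $C A^k x = \lambda^k C x$, while $Cx = -D(K_1)d_1 - D(K_2)d_2$. Choosing $x_0^1 = x$, $x_0^2 = 0$, attack $d_k^a = \lambda^k d_1$ on $K_1$, and alternative attack $\bar d_k^a = -\lambda^k d_2$ on $K_2$, a one--line computation shows both output signals equal $-\lambda^k D(K_2)d_2$ for every $k$, so by Theorem \ref{attackinput} the set $K_1$ is unidentifiable. When $\lambda, x$ are genuinely complex one realizes a real attack by combining the conjugate mode, using that $A,C,D$ are real; one should also check $d_1$ can be taken to excite every sensor of $K_1$, a minor genericity point.

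For the harder ($\Rightarrow$) direction I would pass from the infinite--horizon trajectory to a finite spectral condition. First I must establish $\xi \neq 0$: if $\xi=0$ the displayed identity forces $D(K_1)d_k^a = D(K_2)\bar d_k^a$ for all $k$, so the attack vanishes on $K_1\setminus K_2$; since every sensor of $K_1$ is attacked at least once this yields $K_1\subseteq K_2$, which together with $|K_2|\le|K_1|$ contradicts $K_2\neq K_1$. Next, letting $R=\mathrm{range}\,[\,D(K_1)\ \ D(K_2)\,]$, the trajectory condition says $C A^k\xi\in R$ for all $k$; I would then consider $W=\{\,\eta : C A^k\eta\in R\ \forall k\ge0\,\}$, verify it is a subspace that is $A$--invariant (by shifting the index $k$), and note it is nonzero since $\xi\in W$. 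A nonzero $A$--invariant subspace over $\mathbb{C}$ contains an eigenvector $x$ of $A$ with some eigenvalue $\lambda$; then $(\lambda I - A)x=0$, and $Cx = C A^0 x \in R$ supplies $d_1,d_2$ with $Cx + D(K_1)d_1 + D(K_2)d_2 = 0$, which is exactly the kernel condition.

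The main obstacle is precisely this last reduction: translating ``the output can be nulled for all $k$'' into the existence of a single invariant zero. The clean mechanism is the $A$--invariance of $W$ together with the fact that an operator on a nonzero finite--dimensional complex space always has an eigenvector, which collapses the infinite family of constraints to one. The remaining care is bookkeeping --- signs, the conjugate--pair realization of a real attack, and the support argument guaranteeing $\xi\neq 0$.
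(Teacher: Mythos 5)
The paper does not actually prove this statement: it is imported from Pasqualetti et al.\ \cite{PasqualettiJournal} and stated with a citation only, so there is no in-paper argument to compare yours against; I can only assess your proposal on its merits. Your route is the standard invariant-subspace/PBH argument and is essentially sound. The ($\Leftarrow$) direction is a correct one-mode construction: with $Ax=\lambda x$ and $Cx=-D(K_1)d_1-D(K_2)d_2$, the signals $d_k^a=\lambda^k d_1$ and $\bar d_k^a=-\lambda^k d_2$ make both outputs equal $-\lambda^k D(K_2)d_2$ for all $k$, so Theorem~\ref{attackinput} applies, and the conjugate-pair realification is routine. The ($\Rightarrow$) direction is also right: the support argument ruling out $\xi=0$ correctly uses that every sensor of $K_1$ is attacked at least once together with $|K_2|\le|K_1|$ and $K_2\neq K_1$, and passing from ``$CA^k\xi\in R$ for all $k$'' to a single eigenpair via the $A$-invariant subspace $W$ (nonzero because $\xi\in W$, hence containing an eigenvector over $\mathbb{C}$) is exactly the mechanism that collapses the infinite family of constraints. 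The one point you flag but do not close --- that $d_1$ must excite every sensor of $K_1$ for the constructed signal to qualify as an attack on $K_1$ under the paper's definition --- is a genuine loose end in the ($\Leftarrow$) direction as literally stated: for instance, if $Cx=0$ for the eigenvector $x$, the kernel condition is satisfied with $d_1=d_2=0$ and no admissible attack on $K_1$ is produced. This imprecision is inherited from the theorem statement itself (the source effectively identifies the attack set with the support of the injected signal), so your proof is about as complete as the statement permits; to be airtight you should either add the hypothesis that $d_1$ can be chosen with full support on $K_1$, or note that the construction certifies unidentifiability of the support of $d_1$ rather than of $K_1$ itself.
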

\begin{corollary} \cite{fawzi2011secure}
A system can identify attacks on up to $q$ arbitrary sensors if and only if $(A,C)$ is $2q$ sparse observable. That is, $(A,C)$ remains observable if we remove any $2q$ sensors. \label{cor}
\end{corollary}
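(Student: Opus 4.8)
The plan is to show that $(A,C)$ fails to identify some attack on at most $q$ sensors exactly when $(A,C)$ loses observability upon deletion of some set of at most $2q$ sensors. By definition, the system can identify attacks on up to $q$ arbitrary sensors iff no set $K_1$ with $|K_1| \le q$ is unidentifiable. Invoking the eigenvalue characterization of the preceding theorem, $K_1$ is unidentifiable iff there exist $\lambda \in \mathbb{C}$, a nonzero $x \in \mathbb{C}^n$, and vectors $d_1, d_2$ together with a set $K_2 \ne K_1$, $|K_2| \le |K_1|$, satisfying
\[
(\lambda I - A) x = 0, \qquad C x + D(K_1) d_1 + D(K_2) d_2 = 0.
\]
So I would first reduce the identification question to whether such a triple can be avoided for every admissible pair $(K_1, K_2)$.

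The key step is to eliminate the attack variables $d_1, d_2$ from the output equation. Since $D(K_1)$ and $D(K_2)$ have exactly one nonzero entry per column, placed in the rows indexed by $K_1$ and $K_2$ respectively, the term $D(K_1) d_1 + D(K_2) d_2$ can realize an arbitrary vector supported on $K_1 \cup K_2$. Hence the second equation is solvable in $(d_1, d_2)$ iff its restriction to the complementary rows holds, i.e.\ iff $C^{(K_1 \cup K_2)^c} x = 0$. Combined with $(\lambda I - A)x = 0$ and $x \ne 0$, this is precisely the Popov--Belevitch--Hautus unobservability test for the reduced pair $(A, C^{(K_1 \cup K_2)^c})$. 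Therefore $K_1$ is unidentifiable iff there is a set $K_2 \ne K_1$ with $|K_2| \le |K_1| \le q$ such that deleting the sensors in $K_1 \cup K_2$ renders the system unobservable.

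What remains is a counting argument linking the sets $K_1 \cup K_2$ to arbitrary deletions of at most $2q$ sensors. For the ``only if'' direction I would argue contrapositively: if $(A,C)$ is not $2q$ sparse observable, choose a set $R$ of size $2q$ whose removal destroys observability (any smaller obstructing set can be enlarged to size $2q$, since deleting more rows cannot restore observability), and partition $R$ into two disjoint halves $K_1, K_2$ of size $q$; these are distinct, satisfy $|K_i| \le q$, and have $K_1 \cup K_2 = R$, so $K_1$ is unidentifiable. For the ``if'' direction, any unidentifiable $K_1$ yields a pair with $|K_1 \cup K_2| \le |K_1| + |K_2| \le 2q$; padding $K_1 \cup K_2$ out to a set of size $2q$ (using $m \ge 2q$) keeps the reduced pair unobservable, so $(A,C)$ is not $2q$ sparse observable.

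The main obstacle I anticipate is the careful handling of this last correspondence rather than the algebra: one must ensure the distinctness requirement $K_1 \ne K_2$ is compatible with realizing every size-$2q$ deletion (handled by a disjoint even split when $q \ge 1$), and one must verify the monotonicity claim that removing additional sensors can only preserve unobservability, as well as note that the statement presumes $m \ge 2q$. The elimination of $d_1, d_2$ via the support structure of $D(K_1), D(K_2)$, including the harmless case of overlapping indices $K_1 \cap K_2 \ne \emptyset$, is the conceptual crux and should be stated cleanly.
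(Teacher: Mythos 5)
The paper does not prove this corollary: it is stated as a known result imported from Fawzi et al.\ and follows the PBH-type characterization (Theorem~2) without further argument, so there is no in-paper proof to compare against. Your reconstruction is the natural one and is essentially correct. The reduction of unidentifiability of $K_1$ to unobservability of $(A, C^{(K_1\cup K_2)^c})$ by eliminating $d_1,d_2$ through the column-support structure of $D(K_1),D(K_2)$ is exactly the right move, the PBH step is standard, and the two counting directions (disjoint even split of a size-$2q$ obstructing set; padding $K_1\cup K_2$ up to size $2q$ using monotonicity of unobservability under row deletion) are handled correctly, including the $K_1\neq K_2$ and $m\ge 2q$ caveats.

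One small point you gloss over, at the same level of informality as the paper itself: the definition of an attack on $K_1$ requires every sensor in $K_1$ to be attacked at least once, so in the ``only if'' direction the vector $d_1=-C^{K_1}x$ may have zero entries, in which case the constructed attack is really an attack on the support of $d_1$. This is harmless --- you can shrink $K_1$ and $K_2$ to the supports of $d_1$ and $d_2$ (which remain distinct since they are disjoint, and if both are empty then $Cx=0$ and $(A,C)$ is already unobservable) --- but a fully rigorous write-up should say so. With that remark added, the proof is complete.
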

As a consequence of the previous results, the defender must allocate a large number of sensors to withstand a large number of attacks or alternatively design systems, which are only robust to a fewer attacks.

\section{Moving Target Approach for Identification} \label{Moving Target}
In the previous section we demonstrated that there exist limitations on the number of attacks a defender can potentially identify. Specifically, we saw that to identify all attacks of size $q$, the system must be $2q$ sparse observable. This can result in expenditures to add more sensing in order to withstand more attacks or sacrificing security in order to use fewer components. However, in this section we argue that generating unidentifiable attacks requires knowledge of the model. By limiting this knowledge, we hope to prevent such attacks. To begin we define the following.
\begin{definition}
A nonzero attack on sensor $s$ is unambiguously identifiable at time $t$ if there is no $x_0^* \in \mathbb{R}^n$ satisfying $y_k^s = y^s(x_0^*, 0, k)$ for $0 \le k \le t$. An attack on sensor $s$ is unambiguously identifiable  if it is unambiguously identifiable for all $t$.
\end{definition}
The notion of unambiguous identifiability characterizes when the defender can be certain that sensor $s$ is faulty or under attack. This scenario occurs only if there exists no initial state which produces the output sequence at $y^s$. We envision designing a system that forces the attacker to generate unambiguously identifiable attacks on all sensors which he targets. Consequently, we can identify misbehaving sensors. 

Thus, instead of requiring our system to be $2q$ sparse observable to perform perfect estimation with $q$ attacks or $2q$ detectable to perform stable estimation \cite{nakahira2015dynamic}, forcing an attacker to generate unambiguously identifiable attacks will allow the defender to perform stable estimation when the system is only $q$ detectable (detectable after removing any $q$ sensors). This allows the system to withstand more powerful attacks or use fewer sensing devices while maintaining the same level of security. We now characterize attacks which are not unambiguously identifiable.
\begin{theorem} \label{obsvmatinput}
An attack on sensor $s$ is not unambiguously identifiable at time $t$ if and only if there exists an $x_0^*$ such that $D^s d_k^a = C^s A^k x_0^*$ for all  $0 \le k \le t$ and $ C^s A^k x_0^* \neq 0$ for some time $0 \le k \le t$. 
\end{theorem}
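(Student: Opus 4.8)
The plan is to prove the statement by directly unfolding the two definitions and reducing the identifiability question to a linear-algebraic condition on the output map restricted to sensor $s$. First I would write the observed signal at sensor $s$ in closed form. Since the defender's model gives $x_{k+1}=Ax_k$, the true attacked output is $y_k^s = C^s A^k x_0 + D^s d_k^a$ for the (unknown) true initial state $x_0$, whereas the clean output associated with a candidate initial state $\hat{x}_0$ and zero attack is $y^s(\hat{x}_0,0,k) = C^s A^k \hat{x}_0$. Having both expressions written out explicitly is the essential setup step, since everything afterward is a comparison of these two signals on the window $0 \le k \le t$.

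Next I would unfold the definition of unambiguous identifiability by negating it: the attack is \emph{not} unambiguously identifiable at time $t$ exactly when there exists some $\hat{x}_0$ with $y_k^s = C^s A^k \hat{x}_0$ for all $0 \le k \le t$. Substituting the closed form of $y_k^s$ and cancelling the common term $C^s A^k x_0$, this is equivalent to the existence of $\hat{x}_0$ with $D^s d_k^a = C^s A^k(\hat{x}_0 - x_0)$ on the window. I would then set $x_0^* := \hat{x}_0 - x_0$; because $x_0$ is fixed, the correspondence $\hat{x}_0 \leftrightarrow x_0^*$ is a bijection of $\mathbb{R}^n$, so the existence of a suitable $\hat{x}_0$ is equivalent to the existence of a suitable $x_0^*$ satisfying $D^s d_k^a = C^s A^k x_0^*$ for all $0 \le k \le t$. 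This handles the first half of the right-hand condition in both directions simultaneously, and it is worth noting that the potentially confusing reuse of the symbol $x_0^*$ between the definition (a clean initial state) and the theorem statement (the difference $\hat{x}_0 - x_0$) is exactly accounted for by this shift.

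The remaining work is the nonzero clause $C^s A^k x_0^* \neq 0$ for some $0 \le k \le t$, and this is where the only real care is needed. For the forward direction I would use that we are dealing with a nonzero attack that is active on the window: since $C^s A^k x_0^* = D^s d_k^a$ for every $k$ in $[0,t]$ and the attack value is nonzero for some such $k$, the matched clean signal is forced to be nonzero there as well, giving the clause. For the converse the clause does double duty: it guarantees that $x_0^*$ reproduces a genuinely nonzero signal, i.e. $D^s d_k^a \neq 0$ for some $k$, so we are indeed matching a nonzero attack rather than the trivial all-zero case; and the existence of $x_0^*$ with the matching property directly yields a clean initial state $\hat{x}_0 = x_0^* + x_0$ that explains the observed output on $[0,t]$, hence failure of unambiguous identifiability.

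I expect the main obstacle to be the bookkeeping around this nonzero clause rather than the linear algebra. One must carefully separate the trivial situation in which the attack has not yet manifested on $[0,t]$ (matched only by $x_0^* = 0$, a clean signal) from the genuine ambiguity the theorem is meant to capture, and state clearly that the characterization concerns attacks that are nonzero on the observation window. The algebraic cancellation and the bijection argument are routine once the output is written out; the conceptual content lies entirely in aligning the two definitions and handling this nonzero bookkeeping consistently.
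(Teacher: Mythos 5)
Your proposal is correct and follows essentially the same route as the paper's proof: both unfold the definition, write $y_k^s = C^s A^k x_0 + D^s d_k^a$, and observe that a matching clean trajectory corresponds to the shifted initial state $x_0 + x_0^*$ (the paper writes $y_k^s = y^s(x_0+x_0^*,0,k)$ directly, which is your bijection). Your explicit bookkeeping of the nonzero clause and of the window-restricted case is a more careful rendering of what the paper compresses into ``Assume this attack is nonzero,'' but it is not a different argument.
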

\begin{proof}
Suppose $D^s d_k^a = C^s A^k x_0^*$ for time $0 \le k \le t$. Assume this attack is nonzero. Then, $y_k^s = y^s(x_0+x_0^*,0,k)$. Suppose instead that there is no $x_0^*$ such that $D^s d_k^a \neq C^s A^k x_0^*$ for $0 \le k \le t$ . Then there is no $\bar{x}_0$ such that $y_k^i+D^sd_k^a$ = $C^sA ^k \bar{x}_0$.  Since $y^s(x_0,0,k) = C^s A^k x_0$, the result immediately follows.
\end{proof}
As a result, to prevent attacks on sensor $i$ from being unambiguously identifiable at time $k$, an adversary must insert attacks which lie in the image of $\mathcal{O}_{k+1}^s$ given by
\begin{equation}
\mathcal{O}_{k+1}^s = \begin{bmatrix} {C^s}^T  (C^sA)^T \cdots (C^sA^k)^T \end{bmatrix}^T.
\end{equation}
To insert such attacks, the adversary likely has to be aware of both the matrix $A$ and the matrix $C^s$. In the sequel, we aim to minimize this knowledge to prevent an attacker from generating unidentifiable attacks.
\subsection{A moving target approach}
Ideally, we would like to simply assume the adversary has no knowledge of $(A,C)$ and consequently will likely always be unambiguously identifiable. However, in practice, the processes associated with the physical plant may be well known or previously public so that the attacker is aware of $(A,C)$. Alternatively, the defender can change parameters of the system to ensure a knowledgeable adversary is still thwarted.  Specifically, we propose changing the system matrix $A$ and $C$ in a time varying and unpredictable fashion from the adversary's point of view so that
\begin{align}
x_{k+1} &= A_k x_k, ~~~ y_k = C_k x_k + D d_k^a. \label{eq:timevarying}
\end{align}
We assume that 
\begin{displaymath}
(A_k,C_k) \in \Gamma = \{(A(1),C(1)),\cdots,(A(l),C(l))\}.
\end{displaymath}
\begin{theorem}
An attack on sensor $s$ in \eqref{eq:timevarying}  is not unambiguously identifiable at time $t$ if and only if there exists an $x_0^*$ such that $D^s d_k^a = C_k^s (\prod_{j=0}^{k-1} A_j)x_0^*$ for all time $0 \le k \le t$ and $  C_k^s (\prod_{j=0}^{k-1} A_j)x_0^* \neq 0$ for some time $0 \le k \le t$. \label{MTImage}
\end{theorem}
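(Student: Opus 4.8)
The plan is to mirror the argument used for Theorem~\ref{obsvmatinput}, replacing the powers $A^k$ of a fixed matrix with the time-varying state transition matrix $\Phi_k := \prod_{j=0}^{k-1} A_j$ (with $\Phi_0 = I$) and the fixed output matrix $C^s$ with $C_k^s$. The whole argument rests on the linearity of \eqref{eq:timevarying}: superposition holds exactly as in the LTI case, so the net effect of an attack on sensor $s$ is to shift the attack-free output by $D^s d_k^a$. The first thing I would fix is the convention $\Phi_k = A_{k-1}\cdots A_0$, so that $x_k = \Phi_k x_0$ is the solution of $x_{k+1} = A_k x_k$.

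Next I would write out the two quantities being compared. Under the true initial state $x_0$ and attack $\{d_k^a\}$, the observed output is $y_k^s = C_k^s \Phi_k x_0 + D^s d_k^a$, while the attack-free output generated by a candidate initial state $x_0^*$ is $y^s(x_0^*,0,k) = C_k^s \Phi_k x_0^*$. By the definition of unambiguous identifiability, the attack fails to be unambiguously identifiable at time $t$ precisely when there is some candidate $x_0^*$ for which these agree on the whole window, i.e. $C_k^s \Phi_k x_0 + D^s d_k^a = C_k^s \Phi_k x_0^*$ for all $0 \le k \le t$.

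I would then rearrange this equality to $D^s d_k^a = C_k^s \Phi_k (x_0^* - x_0)$ and relabel the free vector $x_0^* - x_0$ as the $x_0^*$ appearing in the statement; linearity is what lets me absorb the $x_0$-shift cleanly. This gives the equivalence between the existence of an initial state consistent with the attack-free output and the existence of an $x_0^*$ with $D^s d_k^a = C_k^s \Phi_k x_0^*$ on $0 \le k \le t$. It remains to handle the nonzero clause: since the equality $D^s d_k^a = C_k^s \Phi_k x_0^*$ already holds for every $k$ in the window, the attack is nonzero on this window if and only if $C_k^s \Phi_k x_0^* \neq 0$ for some $0 \le k \le t$, which is exactly the second condition of the theorem.

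I do not anticipate a genuine obstacle, since this is a direct translation of the time-invariant argument; the content is entirely in recognizing that $\Phi_k$ plays the role of $A^k$. The two places a careless write-up could slip are getting the ordering of the matrix product right so that $x_k = \Phi_k x_0$ genuinely solves the recursion, and making the nonzero-attack bookkeeping explicit within the window $0 \le k \le t$, since in the LTI proof that step is somewhat compressed.
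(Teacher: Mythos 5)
Your proposal is correct and follows exactly the route the paper intends: the paper's own proof of this theorem is just the remark that it is ``similar to that of Theorem~\ref{obsvmatinput},'' and your argument is precisely that translation, replacing $A^k$ with the state-transition matrix $\prod_{j=0}^{k-1}A_j$ and $C^s$ with $C_k^s$, then using superposition to reduce to $D^s d_k^a = C_k^s\bigl(\prod_{j=0}^{k-1}A_j\bigr)(x_0^*-x_0)$ and relabeling. Your explicit handling of the product ordering and of the nonzero-attack clause on the window $0\le k\le t$ is, if anything, more careful than the paper's compressed treatment.
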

\begin{proof}
The proof is similar to that of Theorem \ref{obsvmatinput}.
\end{proof}
Changing the system matrices as a function of time allows the system to act like a moving target. In particular, even if an attacker is aware of the existing configurations of the system, defined by $\Gamma$, he will likely be forced to generate unambiguously identifiable attacks since he is not aware of the sequence of system matrices. Moreover, since the system matrices keep changing, it is unlikely the attacker can remain unidentifiable by pure chance.
\begin{remark}
The matrices $(A_k,C_k)$ can be changed randomly using a cryptographically secure pseudo random number generator where the random seed is known both by the defender and the plant, but is unavailable to the adversary. From a security perspective, the seed would form the root of trust. The set $\Gamma$ can be obtained by leveraging or introducing degrees of freedom in the dynamics and sensing in our control system. While the defender likely would have to change his control strategy to account for the time varying dynamics, in this work, we will ignore such changes.
\end{remark}
Given the proposed setup, we are now ready to define the attacker's information and an admissible attacker strategy.
\textbf{Attacker Information}
\begin{enumerate}
\item  The adversary has no knowledge of either the input sequence $u_{0:k}$ or the true output sequence.
\item  $\mathcal{I}_k^a = \{\Gamma, Dd_{0:k-1}^a, f(\{(A_k,C_k)\})\}$.
\end{enumerate}
If the adversary can observe the output sequence in a zero input deterministic setting, he can multiply the true outputs by some constant factor to avoid generating  unambiguously identifiable inputs. A realistic adversary may use physical attacks to bias sensors without reading their outputs. Future work will examine relaxing this assumption. The control inputs are also secret so that the attacker will be unable to leverage the input process to gain information about the system model. However, we assume $\Gamma$ is known as well as the sequence of attack inputs. Also, the probability distribution of the sequence of system matrices, $f(\{(A_k,C_k)\})$, is public.
\begin{definition}
An admissible attack policy is a sequence of deterministic mappings $\Omega_k: \mathcal{I}_k^a \rightarrow \mbox{Im}(D(K))$ such that $Dd_k^a = \Omega_k(\mathcal{I}_k^a)$.
\end{definition}
Here, we assume the attacker can only leverage his information to construct a stealthy attack input. Consequently, while there may exist attacks that bypass identification, in order to be admissible, they must leverage the attacker's knowledge and can not be a function of unknown and unobserved stochastic processes (namely the sequence of $\{A_k\}$ and $\{C_k\}$). A real adversarial strategy may be to bias sensors with the goal of affecting state estimation, without being identified by the defender. Thus, the adversary can impact the system without corrective measures being put in place.

\section{System Design for Deterministic Identification} \label{Design}
In this section, we consider criteria the defender can use to design the set $\Gamma$, which can allow him to identify malicious inputs on a subset of sensors. 
Given the attacker's knowledge of $\Gamma$, an adversary can guess the sequence of system matrices chosen by the defender. If the adversary guesses correctly, he can generate attacks which are not unambiguously identifiable. We would now like to characterize the scenario where an attacker can guess the sequence of matrices incorrectly yet still generate an unambiguously identifiable attack. 
\begin{theorem} \label{sum of nulls}
Suppose an adversary generates an attack on sensor $s$ by guessing a sequence $\{l_k\}$ where $l_i \in \{1, \cdots, l\}$ and creating inputs by applying Theorem \ref{MTImage}. Specifically, there exists an $x_0^1$ such that $D^s d_k^a = C^s(l_k) (\prod_{j=0}^{k-1} A(l_j))x_0^1$ for all time $0 \le k \le t$ and $ D^s d_{\eta}^a \neq 0$ for some time $0 \le \eta \le t$. Such a strategy may avoid generating an unambiguously identifiable attack on sensor $s$ at time $t$ if and only if
\begin{equation}
 \mbox{null} \begin{pmatrix}  \mathcal{O}(l_s,t) & \mathcal{O}(s,t)  \end{pmatrix} > \mbox{null}  \begin{pmatrix} 
  \mathcal{O}(l_s,t) \end{pmatrix} + \mbox{null} \begin{pmatrix}  \mathcal{O}(s,t) \end{pmatrix},  \label{eq:attackcond}
  \end{equation}
\begin{align*}
 \mathcal{O}(l_s,t) &= \begin{bmatrix} (C^s(l_0))^T  & \cdots & (C^i(l_t)\prod_{j=0}^{t-1}A(l_j))^T \end{bmatrix}^T, \\
 \mathcal{O}(i,t)  &=  \begin{bmatrix} (C_0^s)^T &  (C_1^sA_0)^T &  \cdots & (C_t^s \prod_{j=0}^{t-1}A_j)^T \end{bmatrix}^T,
\end{align*}
where null refers to the dimension of the null space.
\end{theorem}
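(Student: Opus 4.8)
The plan is to strip the statement down to a pure linear-algebra fact about the null spaces of the two stacked observability-type matrices, using Theorem \ref{MTImage} as the bridge. First I would rewrite everything in stacked form. Setting $\mathbf{d} = \begin{bmatrix} (D^s d_0^a)^T & \cdots & (D^s d_t^a)^T \end{bmatrix}^T$ for the attack signal on sensor $s$ over $0 \le k \le t$, the attacker's guessing strategy produces exactly $\mathbf{d} = \mathcal{O}(l_s,t)\,x_0^1$ for some $x_0^1$, and the hypothesis $D^s d_\eta^a \neq 0$ is precisely $\mathbf{d} \neq 0$. By Theorem \ref{MTImage}, this nonzero attack fails to be unambiguously identifiable at time $t$ (under the true matrix sequence) if and only if $\mathbf{d} \in \mbox{Im}(\mathcal{O}(s,t))$, i.e. $\mathbf{d} = \mathcal{O}(s,t)\,x_0^*$ for some $x_0^*$. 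Hence ``the strategy may avoid being unambiguously identifiable'' means exactly: there exists $x_0^1$ with $\mathcal{O}(l_s,t)\,x_0^1 \neq 0$ and $\mathcal{O}(l_s,t)\,x_0^1 \in \mbox{Im}(\mathcal{O}(s,t))$.

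Next I would introduce the concatenated matrix $M = \begin{pmatrix} \mathcal{O}(l_s,t) & \mathcal{O}(s,t)\end{pmatrix}$ and note that a pair $(x_0^1,-x_0^*)$ lies in $\ker M$ exactly when $\mathcal{O}(l_s,t)\,x_0^1 = \mathcal{O}(s,t)\,x_0^*$. Thus the existence of a successful guess is equivalent to $\ker M$ containing a vector whose first block $x_0^1$ is \emph{not} annihilated by $\mathcal{O}(l_s,t)$. The key computation is then to isolate the degenerate part of the kernel: consider $W = \{(a,b) \in \ker M : \mathcal{O}(l_s,t)\,a = 0\}$. If $(a,b)\in\ker M$ with $\mathcal{O}(l_s,t)\,a=0$, then $\mathcal{O}(s,t)\,b = -\mathcal{O}(l_s,t)\,a = 0$, so $a$ and $b$ range independently over the two individual kernels, giving $W = \ker \mathcal{O}(l_s,t) \times \ker \mathcal{O}(s,t)$ and hence $\dim W = \mbox{null}(\mathcal{O}(l_s,t)) + \mbox{null}(\mathcal{O}(s,t))$.

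The conclusion is then immediate: a vector of $\ker M$ with $\mathcal{O}(l_s,t)\,a \neq 0$ exists if and only if $W$ is a \emph{proper} subspace of $\ker M$, i.e. if and only if $\mbox{null}(M) = \dim \ker M > \dim W = \mbox{null}(\mathcal{O}(l_s,t)) + \mbox{null}(\mathcal{O}(s,t))$, which is precisely \eqref{eq:attackcond}. I would also remark that the criterion is symmetric: since $\mathbf{d}$ is common to both representations, $\mathcal{O}(l_s,t)\,a \neq 0$ is the same statement as $\mathbf{d}\neq 0$ and as $\mathcal{O}(s,t)\,b \neq 0$, so no case distinction between the two observability matrices is required.

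I expect the only genuinely delicate point to be correctly encoding the nonzero-attack requirement. Without it, every concatenation trivially admits a common-image vector (namely $0$), so the entire content of the theorem lies in excluding the degenerate case $\mathbf{d}=0$. Distinguishing $W$ from $\ker M$ is exactly the mechanism that excludes it, and recognizing $W$ as the product of the two individual kernels is what produces the additive term $\mbox{null}(\mathcal{O}(l_s,t)) + \mbox{null}(\mathcal{O}(s,t))$ on the right-hand side of the inequality.
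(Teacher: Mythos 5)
Your proposal is correct and follows essentially the same route as the paper: both reduce the claim via Theorem \ref{MTImage} to the existence of $x_0^1, x_0^2$ with $\mathcal{O}(l_s,t)x_0^1 = \mathcal{O}(s,t)x_0^2 \neq 0$, and then translate that into the null-space inequality. The only difference is that the paper dismisses the final step with ``the result immediately follows,'' whereas you explicitly supply the dimension count by identifying the degenerate subspace $W = \ker\mathcal{O}(l_s,t) \times \ker\mathcal{O}(s,t)$ inside $\ker\begin{pmatrix}\mathcal{O}(l_s,t) & \mathcal{O}(s,t)\end{pmatrix}$ — a worthwhile elaboration, not a different argument.
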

\begin{proof}
From Theorem \ref{MTImage}, an attack is not unambiguously identifiable at time $t$ if and only if there exists some $x_0^2$ such that  $D^s d_k^a = C_k^s (\prod_{j=0}^{k-1} A_j) x_0^2$ for $0 \le k \le t$ and this sequence is nonzero. Thus, the proposed strategy can generate a nonzero unambiguously identifiable attack on sensor $i$ at time $t$ if and only if
\begin{equation*}
C^s(l_k)\left(\prod_{j=0}^{k-1}A(l_j)\right) x_0^1 =  C_k^s \left(\prod_{j=0}^{k-1}A_j\right) x_0^2,
\end{equation*}
for all $0 \le k \le t$ and moreover for some $0 \le k \le t$ this expression is nonzero. The result immediately follows.
\end{proof}

In practice, it is unlikely that the defender can change the parameters of the system at each time step due to the system's inertia. Consequently, we would like to consider systems where $(A_k,C_{k})$ remains constant for longer periods of time. For now, we assume $(A_{k},C_{k}) \subset \{\Gamma\}$, but is \textit{constant}. An adversary, can use his knowledge of $\Gamma$ to guess a pair  $(A_{k},C_{k}) \in \Gamma$ and generate unidentifiable attack inputs. Define the matrix
\begin{equation}
\mathcal{O}_{t,j}^{\mathcal{S}} = \begin{bmatrix} {C^{\mathcal{S}}(j)}^T  (C^{\mathcal{S}}(j)A(j))^T \cdots (C^{\mathcal{S}}(j)A(j)^{t-1})^T \end{bmatrix}^T.
\end{equation}

If the attacker guesses the matrices $(A(j),C(j))$ and chooses to attack sensor ${s}$, he would need to ensure $\begin{bmatrix} (D^s d_0^a)^T  & \cdots &(D^s d_t^a)^T \end{bmatrix}^T$ lies in the image of $\mathcal{O}_{t+1,j}^{{s}}$ to avoid deterministic identification. We next determine when an attacker is able to guess an incorrect pair and  avoid generating an unambiguously identifiable attack. 

\begin{theorem}
 Suppose $(A,C) = (A(1),C(1))$ and an adversary generates a nonzero attack input on sensor $s$ using $(A(2),C(2))$ by inserting attacks along the image of $\mathcal{O}_{t,2}^{{s}}$. Let $\Lambda^1 = \{\lambda_1^1, \cdots, \lambda_{q_1}^1\}$ be the set of distinct eivenvalues associated with $A(1)$ and $\Lambda^2 = \{\lambda_1^2, \cdots, \lambda_{q_2}^2\}$ be the set of distinct eigenvalues of $A(2)$. Let 
 \begin{displaymath}
 \{v_{1,1}^{\lambda,j},\cdots v_{r_1,1}^{\lambda,j}, v_{1,2}^{\lambda,j},\cdots v_{r_2,2}^{\lambda,j} , \cdots, v_{1,l_{\lambda,j}}^{\lambda,j},\cdots v_{r_{l_{\lambda,j}},l_{\lambda,j}}^{i,j} \}
\end{displaymath}
be a maximal set of linearly independent (generalized) eigenvectors associated with eigenvalue $\lambda$ of $A(j)$ satisfying
\begin{equation}
A(j) v_{1,l}^{\lambda,j} = \lambda v_{1,l}^{\lambda,j}, ~~~~ A(j)v_{k+1,l}^{\lambda,j} = \lambda v_{k+1,l}^{\lambda,j} + v_{k,l}^{\lambda,j}.
\end{equation}
Noting that each $r_i$ is in general fully determined by $\lambda$ and $j$, let $r(\lambda) = \max_{i,j} r_i(\lambda,j)$.
Define $V_{s,k}^{\lambda,j} \in \mathbb{C}^{r(\lambda) \times r_k}$ as 
\begin{equation*}
 \begin{bmatrix} C^s(j) v_{1,k}^{\lambda,j} & C^s(j) v_{2,k}^{\lambda,j} &  \cdots & \cdots & C^s(j) v_{r_k,k}^{\lambda,j} \\											\mathbf{0} & C^s(j) v_{1,k}^{\lambda,j} & \ddots  & \ddots & \vdots \\
 									\mathbf{0} &  \mathbf{0} &  \ddots   & \ddots & \vdots \\
 									\mathbf{0} &  \mathbf{0} & \cdots  &   C^s(j) v_{1,k}^{\lambda,j} & C^s(j) v_{2,k}^{\lambda,j} \\
 									\mathbf{0} &  \mathbf{0} & \mathbf{0}  & \mathbf{0} &  C^s(j) v_{1,k}^{\lambda,j} \\
 									\mathbf{0} &  \mathbf{0} & \mathbf{0}  & \mathbf{0} & \mathbf{0} \end{bmatrix}.
\end{equation*}
There exists an attack on sensor $s$, which is not unambiguously identifiable for all time if and only if $\Lambda^1 \cap \Lambda^2 \neq \emptyset$ and there exist some $\lambda \in \Lambda^1 \cap \Lambda^2$ such that 
\begin{equation*}
\mbox{null} \begin{pmatrix} \mathcal{V}_s^{\lambda,1} & \mathcal{V}_s^{\lambda,2} \end{pmatrix} > \mbox{null} \begin{pmatrix} \mathcal{V}_s^{\lambda,1} \end{pmatrix} + \mbox{null} \begin{pmatrix} \mathcal{V}_s^{\lambda,2} \end{pmatrix},
\end{equation*}
where
\begin{equation*}
\mathcal{V}_s^{\lambda,j} = \begin{pmatrix}  V_{s,1}^{\lambda,j} &  \cdots &  V_{s,l_{\lambda,j}}^{\lambda,j} \end{pmatrix}.
\end{equation*}
Otherwise the attack can be detected in time $t \le 2n-1$. \label{theorem:det}
\end{theorem}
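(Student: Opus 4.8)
The plan is to reduce the statement, via Theorem \ref{MTImage} specialized to constant matrices, to a question about matching the two output signals generated by the attacker's guessed system $(A(2),C(2))$ and the defender's true system $(A(1),C(1))$, and then to resolve that question through a Jordan/modal decomposition. Since the attacker inserts $D^s d_k^a = C^s(2) A(2)^k x_0^1$ along the image of $\mathcal{O}_{t,2}^{s}$, Theorem \ref{MTImage} says the attack fails to be unambiguously identifiable for all time precisely when there is an $x_0^2$ with
\[ C^s(2)A(2)^k x_0^1 = C^s(1)A(1)^k x_0^2, \quad k \ge 0, \]
and the common sequence is nonzero. So the theorem is exactly a characterization of when such a nonzero matched pair exists.

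First I would expand each side in the Jordan basis of the corresponding $A(j)$. For a single chain $v_{1,k}^{\lambda,j},\dots,v_{r_k,k}^{\lambda,j}$, writing $N = A(j)-\lambda I$ so that $N v_{m+1,k}^{\lambda,j} = v_{m,k}^{\lambda,j}$, one has the standard identity $A(j)^k v_{m,k}^{\lambda,j} = \sum_{p=0}^{m-1}\binom{k}{p}\lambda^{k-p} v_{m-p,k}^{\lambda,j}$. Applying $C^s(j)$ and collecting the time-basis functions $\binom{k}{p}\lambda^{k-p}$ then produces exactly the upper-triangular Toeplitz matrix $V_{s,k}^{\lambda,j}$ acting on the chain's coordinate vector; concatenating over all $l_{\lambda,j}$ chains of $\lambda$ yields $\mathcal{V}_s^{\lambda,j}$. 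Thus the $\lambda$-modal content of the output equals $\mathcal{V}_s^{\lambda,j}$ applied to the coordinates of the state in the generalized eigenspace of $\lambda$. This is the computational heart of the argument and the step I expect to be most delicate: I must verify that the zero-padding to $r(\lambda)$ rows aligns the two systems' coefficient vectors over a \emph{common} indexed set of time-basis functions, so that the coefficient blocks become directly comparable.

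Next I would invoke the linear independence, as sequences in $k$, of the functions $\{\binom{k}{p}\lambda^{k-p}\}$ over distinct $\lambda$ and $p\ge 0$ (treating $\lambda=0$ as finitely supported basis functions separately). This converts the signal equality into a per-eigenvalue condition: for every $\lambda$ the $\lambda$-modal coefficient vectors of the two sides must agree. Modes at $\lambda\in\Lambda^2\setminus\Lambda^1$ must therefore vanish and cannot contribute, so a nonzero matched output forces $\Lambda^1\cap\Lambda^2\neq\emptyset$ together with the existence of some shared $\lambda$ and coordinate vectors $c^1,c^2$ with $\mathcal{V}_s^{\lambda,1}c^1 = \mathcal{V}_s^{\lambda,2}c^2 \neq 0$, i.e.\ $\mathrm{Im}\,\mathcal{V}_s^{\lambda,1}\cap \mathrm{Im}\,\mathcal{V}_s^{\lambda,2}\neq\{0\}$.

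Finally I would translate this image-intersection condition into the stated null-space inequality using the dimension identity
\[ \mathrm{null}\begin{pmatrix} X & Y \end{pmatrix}=\mathrm{null}(X)+\mathrm{null}(Y)+\dim\!\left(\mathrm{Im}\,X\cap\mathrm{Im}\,Y\right), \]
which follows from rank-nullity applied to the surjection $(x,y)\mapsto Xx=-Yy$ defined on $\ker\begin{pmatrix} X & Y\end{pmatrix}$, whose kernel is $\ker X\times\ker Y$. The strict inequality in \eqref{eq:attackcond}-style form then holds exactly when the image intersection is nontrivial, completing the equivalence. For the ``otherwise'' claim, I would assemble the block-diagonal system with state matrix $\mathrm{diag}(A(2),A(1))$ and output matrix $\begin{pmatrix} C^s(2) & -C^s(1)\end{pmatrix}$ of dimension $2n$, and apply Cayley--Hamilton to this combined system: if no all-time match exists, the output difference cannot vanish on $0\le k\le 2n-1$ without vanishing identically, so a mismatch appears within $t\le 2n-1$, yielding detection in that horizon.
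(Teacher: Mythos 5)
Your proposal is correct and follows the same overall route as the paper's appendix proof: reduce via Theorem \ref{MTImage} to matching $C^s(2)A(2)^k x_0^1 = C^s(1)A(1)^k x_0^2$ for all $k$, expand both sides in the Jordan bases so that the $\lambda$-modal content of each output is $\mathcal{V}_s^{\lambda,j}$ applied to the chain coordinates (the paper's equation \eqref{eq:jordan} is exactly your $\binom{k}{p}\lambda^{k-p}$ identity written with derivatives), conclude that a nonzero match forces a shared eigenvalue with $\mathrm{Im}\,\mathcal{V}_s^{\lambda,1}\cap\mathrm{Im}\,\mathcal{V}_s^{\lambda,2}\neq\{0\}$, and convert that to the null-space inequality via the dimension identity (which the paper uses only implicitly; you state it, which is a small improvement in transparency). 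The one place you genuinely diverge is the justification of modal independence and the $t\le 2n-1$ horizon: the paper builds an explicit generalized Vandermonde matrix $G^*$ over a window of length $\tau+1\ge 2n$, truncates it, and invokes the fundamental theorem of algebra to show the truncation is nonsingular, so that the null space of $G^*$ is excited only through duplicated columns coming from shared eigenvalues; you instead cite linear independence of exponential-polynomial sequences and obtain the $2n-1$ bound by Cayley--Hamilton on the $2n$-dimensional block-diagonal system $\mathrm{diag}(A(2),A(1))$ with output $\begin{pmatrix} C^s(2) & -C^s(1)\end{pmatrix}$. Your Cayley--Hamilton argument is cleaner and self-contained for the detection horizon, whereas the paper's Vandermonde construction does double duty, simultaneously proving the independence of modes and the finite-window claim. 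Two minor points you should patch when writing this up: (i) when $\lambda$ is complex the matched signal $\mathcal{V}_s^{\lambda,1}\alpha_1=\mathcal{V}_s^{\lambda,2}\alpha_2$ is complex, and the attack input must be real, so you need the paper's step of adding the conjugate mode and checking the realified signal is still nonzero; (ii) make explicit, as you anticipate, that the zero-padding of $V_{s,k}^{\lambda,j}$ to $r(\lambda)$ rows indexes both systems' coefficients against the same basis functions $\binom{k}{p}\lambda^{k-p}$, $0\le p\le r(\lambda)-1$, so the blockwise comparison is legitimate.
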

\begin{proof}
The proof is lengthy and found in \cite{weerakkody2016information} along with numerical simulations.
\end{proof}
The previous theorem gives the defender an efficient way to determine if the attacker can guess $\Gamma$ incorrectly yet still remain undetected in the case that system matrices are kept constant for at least a period of $2n$ time steps. It also prescribes a means to perform perfect identification.\\
\textbf{Design Recommendations}
\begin{enumerate}
\item  For all pairs $i \neq j \in \{1, \cdots l\}$, $\Lambda^i \cap \Lambda^j = \emptyset$.
\item  The system matrices $(A_k,C_k)$ are periodically changed after every $N \ge 2n$ time steps. 
\item  Let $\{l_k\}$ be a sequence where $l_k \in \{1, \cdots, l\}$. Let $q_k$ denote the indices of a subsequence. $\mbox{Pr}((A_{q_k},C_{q_k}) = (A(l_k),C(l_k)),~ \forall k) = 0$.
\item  The pair $(A(i),C(i))$ is observable all $i \in \{1, \cdots, l\}$.
\item  For all $i \in \{1, \cdots l\}$, $0 \notin \Lambda^i$.  
\end{enumerate}

\begin{corollary}
Assume a defender follows the design recommendations. Suppose sensor $s$ is attacked and there is no $t^*$ such that $D^s d_k^a = 0$ for all $k \ge t^*$. Then, the sensor attack will be unambiguously identifiable with probability 1. \label{cor:design}
\end{corollary}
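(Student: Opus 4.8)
The plan is to exploit the fact that an admissible attack is computed from $\mathcal{I}_k^a = \{\Gamma, Dd_{0:k-1}^a, f(\{(A_k,C_k)\})\}$, which does \emph{not} include the realized matrix sequence. Hence the whole attack $\{D^s d_k^a\}$ --- and in particular its restriction to each period (block) on which $(A_k,C_k)$ is held constant --- is fixed without reference to the realization and is therefore statistically independent of it. First I would record two consequences of the hypotheses. Since there is no $t^*$ with $D^s d_k^a = 0$ for all $k \ge t^*$, the attack is nonzero in infinitely many blocks; call these the active blocks $b_0 < b_1 < \cdots$, a subsequence fixed by the attack alone. By recommendation 2 each block spans $N \ge 2n$ steps, so within a block the dynamics are LTI with some realized pair $(A(\sigma_b),C(\sigma_b)) \in \Gamma$, where $\sigma_b$ is the block's realized index.

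The key lemma I would prove is that each active block admits at most one consistent configuration. By Theorem \ref{MTImage}, for the attack to fail to be unambiguously identifiable across block $b$, the block's attack must equal $C^s(\sigma_b)A(\sigma_b)^{\,\cdot}\,z$ for some vector $z$, i.e. it must lie in $\mbox{Im}(\mathcal{O}_{N,\sigma_b}^s)$. Suppose a nonzero block attack lay in $\mbox{Im}(\mathcal{O}_{N,i}^s) \cap \mbox{Im}(\mathcal{O}_{N,j}^s)$ for two indices $i \neq j$. Reading $i$ as the true pair and $j$ as the attacker's guess, this is precisely the hypothesis of Theorem \ref{theorem:det}: a nonzero attack in the image of the guessed observability matrix that is not detected within $2n-1 \le N-1$ steps. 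Theorem \ref{theorem:det} then forces $\Lambda^i \cap \Lambda^j \neq \emptyset$, contradicting recommendation 1. (Recommendations 4 and 5 are what let Theorem \ref{theorem:det} apply cleanly: observability of each pair and invertibility of each $A(i)$ via $0 \notin \Lambda^i$ guarantee that a nonzero block attack corresponds to a nonzero state trajectory that cannot collapse.) Hence for each active block there is at most one index $g_b$ with the block attack consistent with $(A(g_b),C(g_b))$; set $g_b$ to a dummy value when none exists. Crucially $\{g_b\}$ is fixed by the attack and independent of $\{\sigma_b\}$.

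Next I would reduce the global condition of Theorem \ref{MTImage} to these per-block necessary conditions. If the attack is never unambiguously identifiable, then evaluating the condition at $t$ equal to the end of each active block shows that the block's attack must be consistent with the realized pair $(A(\sigma_b),C(\sigma_b))$; by the lemma the unique such index is $g_b$, so $\sigma_b = g_b$ for all active $b$, and therefore $\sigma_{b_k} = g_{b_k}$ for all $k$. Matching the realized block sequence to a predetermined guessed sequence along the fixed subsequence $\{b_k\}$ has probability zero by recommendation 3. Consequently, with probability one some active block $b$ has $\sigma_b \neq g_b$; its nonzero attack then lies outside $\mbox{Im}(\mathcal{O}_{N,\sigma_b}^s)$, so by Theorem \ref{MTImage} no $x_0^*$ reproduces the output over that block, and the attack becomes unambiguously identifiable at a finite time (within $2n-1$ steps of the block by Theorem \ref{theorem:det}), remaining so afterward by monotonicity in $t$.

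The main obstacle I anticipate is that Theorem \ref{MTImage} is a genuinely global statement: a single $x_0^*$ must explain the entire history through the time-varying product $\prod_j A_j$, so the blocks do not a priori decouple. The way around this is to use only the necessary direction --- non-identifiability on $[0,t]$ forces block-wise consistency with the realized pair for every complete block in $[0,t]$ --- together with the uniqueness lemma, which is exactly what pins the realized indices to the predetermined guess $\{g_b\}$ and lets recommendation 3 deliver the zero-probability conclusion. A secondary point to verify is that the active subsequence $\{b_k\}$ and the guesses $\{g_b\}$ are measurable with respect to the attacker's information alone, so the independence invoked for recommendation 3 remains legitimate even if the attacker randomizes.
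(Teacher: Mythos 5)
Your proof is correct and follows essentially the same route as the paper's (much terser) argument: persistent nonzero attacks force the adversary to match the realized configuration on infinitely many blocks, which recommendations 1 and 2 (via Theorem \ref{theorem:det}) make necessary and recommendation 3 makes a probability-zero event. Your expansion --- the independence of the admissible attack from the realization, the per-block uniqueness lemma, and the reduction of the global condition in Theorem \ref{MTImage} to block-wise membership in $\mbox{Im}(\mathcal{O}_{N,\sigma_b}^s)$ --- supplies exactly the details the paper leaves implicit.
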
 
\begin{proof}
Since the attack is persistently nonzero, the adversary must guess a correct subsequence of system matrices infinitely many times due to recommendations 1 and 2. From recommendation 3, this occurs with probability 0.
\end{proof}
As a result, an attacker who persistently biases a sensor will be perfectly identified. Note that recommendation 3 can be achieved with an IID assumption or an aperiodic and irreducible Markov chain. The last 2 recommendations are justified next when we consider stochastic systems.

\section{False Data Injection Detection } \label{Stochastic}
In this section, we examine the effectiveness of the moving target defense for  detection in the case of a stochastic system. Here, we assume that
\begin{align}
x_{k+1} = A_k x_k + w_k,~~ y_k = C_k x_k + D d_k^a + v_k. \label{eq:timevaryingstoc}
\end{align}
$w_k$ and $v_k$ are IID Gaussian process and sensor noise where $w_k \sim \mathcal{N}({0},Q)$ and $v_k \sim \mathcal{N}({0},R)$. For notational simplicity we assume that the covariances $Q \ge 0$ and $R > 0$ are constant. However, we can obtain the ensuing results even in the case that $Q$ and $R$ are dependent on $A_k$ and $C_k$. 

The adversary's and defender's information and strategy is unchanged except we assume the defender has knowledge of the distribution of the initial state. Specifically, $f(x_0|\mathcal{I}_{-1}) = \mathcal{N}(\hat{x}_0^-,P_{0|-1})$. Moreover, both the defender and attacker are aware of the noise statistics. We first would like to show that a moving target defense leveraging the design recommendations listed above can almost surely detect harmful false data injection attacks. To characterize detection performance, we consider the additive bias the adversary injects on the normalized residues $\Delta z_k$ due to his sensor attacks. The residues, $z_k$, are the normalized difference between the observed measurements and their expected values. The bias on the residues is given by 
\begin{align}
\Delta e_k &= (A_{k-1}-K_kC_kA_{k-1}) \Delta e_{k-1} - K_k D d_k^a,\nonumber \\
\Delta z_k &= \mathcal{P}_k^{-\frac{1}{2}}\left( C_k A_{k-1} \Delta e_{k-1} + D d_k^a \right), ~\Delta e_0 = 0, \nonumber \\
 \mathcal{P}_k &= (C_k P_{k|k-1} C_k^T + R),~ K_k  = P_{k|k-1} C_k^T \mathcal{P}_k^{-1}, \nonumber \\
P_{k+1|k} &= A_k(P_{k|k-1} - K_kC_kP_{k|k-1})A_k^T + Q, 
\end{align}
where $\Delta e_k$ is the bias injected on the a posteriori state estimation error obtained by an optimal Kalman filter, and $P_{k|k-1}$ is the a priori error covariance. In \cite{weerakkody2016informationflow}, the authors show through their study of information flows in control systems that the residue bias in additive attacks is related to the optimal decay rate for the probability of false alarm. 
\begin{theorem}
Let $0 < \delta < 1$. Define $\alpha_k$ and $\beta_k$ as the probability of false alarm and detection respectively. Suppose $\limsup_{k \rightarrow \infty} \frac{1}{2(T+1)} \sum_{k=0}^{T} \Delta z_k^T \Delta z_k \ge \epsilon$. Then there exists a detector such that \\
$\beta_k \ge 1 - \delta,\forall k, ~~ \limsup_{k \rightarrow \infty} -\frac{1}{k+1} \log(\alpha_k) \ge \epsilon$.\\
Alternatively suppose $y_{0:k}$ is ergodic under attack and normal operation and that $\lim_{k \rightarrow \infty} \frac{1}{2(T+1)} \sum_{k=0}^{T} \Delta z_k^T \Delta z_k \le \epsilon$. Then for all detectors \\
$\beta_k \ge 1 - \delta,\forall k \implies \limsup_{k \rightarrow \infty} -\frac{1}{k+1} \log(\alpha_k) \le \epsilon$
\end{theorem}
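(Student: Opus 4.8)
The plan is to read this as a Chernoff--Stein lemma (Stein's lemma) for the binary hypothesis test between nominal operation and attack, carried out on the whitened residue sequence $\{z_k\}$. The first step is to pin down the two hypotheses at the level of the residues, conditioned on the realized, defender-known matrix sequence $\{(A_k,C_k)\}$. Under normal operation the normalized residues are independent and standard Gaussian, $z_k \sim \mathcal{N}(0,I)$, while an admissible attack is a deterministic function of $\mathcal{I}_k^a$ (it sees neither the noise nor the outputs), so the recursion for $\Delta e_k$ and $\Delta z_k$ shows it injects a \emph{deterministic} bias, giving $z_k \sim \mathcal{N}(\Delta z_k, I)$ with unchanged covariance and still independent across $k$. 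The Kullback--Leibler divergence between the per-step distributions is then exactly $\tfrac12\|\Delta z_k\|^2 = \tfrac12\Delta z_k^T\Delta z_k$, so the Ces\`aro-averaged divergence rate $\tfrac{1}{T+1}\sum_{k=0}^T \tfrac12\Delta z_k^T\Delta z_k$ is precisely the $\epsilon$ appearing in the statement. This reduces the theorem to the claim that the optimal false-alarm error exponent, subject to the power constraint $\beta_k \ge 1-\delta$, equals this divergence rate.

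For the achievability direction I would take the detector to be a threshold test on the cumulative log-likelihood ratio
\begin{equation*}
L_k = \sum_{j=0}^{k}\left(\Delta z_j^T z_j - \tfrac12\|\Delta z_j\|^2\right),
\end{equation*}
declaring an attack when $L_k > \tau_k$. The Chernoff--Stein change-of-measure bound gives $\alpha_k = \Pr_{H_0}(L_k > \tau_k) \le e^{-\tau_k}$, so the false-alarm exponent is at least $\tau_k/(k+1)$. It remains to choose $\tau_k$ so the power stays above $1-\delta$: since $\mathbb{E}_{H_1}[L_k] = \sum_{j\le k}\tfrac12\|\Delta z_j\|^2$ has normalized value approaching $\epsilon$ along the subsequence realizing the $\limsup$, a law of large numbers for the independent, non-identically distributed summands $\Delta z_j^T z_j$ lets me set $\tau_k$ with vanishing normalized slack below $(k+1)\epsilon$ while keeping $\Pr_{H_1}(L_k > \tau_k) \ge 1-\delta$ for all $k$. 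Combining the two bounds yields $\limsup_k -\tfrac{1}{k+1}\log\alpha_k \ge \epsilon$.

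The converse is where the ergodicity hypothesis does the real work, and I expect it to be the main obstacle. Here I would invoke a Shannon--McMillan--Breiman (asymptotic-equipartition) argument: ergodicity of $y_{0:k}$ under both hypotheses forces $\tfrac{1}{k+1}L_k$ to concentrate around $\epsilon$, so the typical set $T_k = \{\,|L_k - (k+1)\epsilon| \le (k+1)\gamma\,\}$ has $P_1$-probability tending to one. Given any test with rejection region $R_k$ satisfying $\beta_k \ge 1-\delta$, intersecting with $T_k$ and changing measure gives $\alpha_k = P_0(R_k) \ge e^{-(k+1)(\epsilon+\gamma)}P_1(R_k\cap T_k) \ge e^{-(k+1)(\epsilon+\gamma)}(1-\delta-o(1))$, whence $\limsup_k -\tfrac{1}{k+1}\log\alpha_k \le \epsilon+\gamma$; letting $\gamma\to 0$ closes the argument. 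The delicate points are establishing the AEP for the time-varying, non-stationary residue sequence (precisely why ergodicity rather than an i.i.d. assumption is imposed) and confirming that the residue covariance is untouched by the attack so that the likelihood ratio has the clean Gaussian form above; a crude data-processing bound would only yield the weaker exponent $\epsilon/(1-\delta)$, so the AEP step is essential to remove the dependence on $\delta$.
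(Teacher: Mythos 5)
The paper does not actually prove this theorem: it is imported verbatim from the cited reference \cite{weerakkody2016informationflow} on information flows in control systems, and no argument is given in the present text. So there is no in-paper proof to compare against; your proposal has to be judged on its own. On those terms it is essentially the right argument, and almost certainly the intended one: the statement is a Chernoff--Stein (Stein's lemma) dichotomy for the whitened residues, the per-step Kullback--Leibler divergence between $\mathcal{N}(0,I)$ and $\mathcal{N}(\Delta z_k,I)$ is exactly $\tfrac12\Delta z_k^T\Delta z_k$, and the quantity $\tfrac{1}{2(T+1)}\sum_k\Delta z_k^T\Delta z_k$ in the hypothesis is precisely the Ces\`aro divergence rate. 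Your achievability step is correct once made quantitative: under $H_1$ the log-likelihood ratio $L_k$ has mean $D_k=\sum_{j\le k}\tfrac12\|\Delta z_j\|^2$ and variance $2D_k$, so Chebyshev with threshold $\tau_k=D_k-c\sqrt{D_k}$ (and $c$ chosen from $\delta$) keeps $\beta_k\ge 1-\delta$ for every $k$, including early times where $D_k=0$ and one simply alarms always, while $\sqrt{D_k}/(k+1)\to 0$ along the subsequence attaining the $\limsup$ gives the exponent $\epsilon$. Your converse via the typical set and change of measure is the standard strong-converse route and correctly explains why the ergodicity hypothesis appears (a plain data-processing bound would leave a $1/(1-\delta)$ factor). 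The one point you flag but do not discharge --- that ergodicity of $y_{0:k}$ under both hypotheses yields an AEP for the information density of a time-varying, non-stationary Gaussian residue sequence --- is genuinely the technical core of the converse and is presumably what the cited reference supplies; as a blind reconstruction, identifying it as the load-bearing step is the right call, but a self-contained proof would need to apply the ergodic theorem to the information density itself rather than to $y_{0:k}$ generically.
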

We now show that an admissible adversary is restricted in the bias he can inject on the state estimation error without significantly biasing the residues and incurring detection. In particular, we have the following result.
\begin{theorem}
Suppose a defender uses a moving target defense leveraging the design recommendations listed above. Then for all admissible attack strategies $\limsup_{k \rightarrow \infty} \| \Delta e_k \| = \infty \implies  \limsup_{k \rightarrow \infty} \| \Delta z_k \| = \infty$. \label{theorem:fdi}
\end{theorem}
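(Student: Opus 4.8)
I would argue the contrapositive: assuming the residue bias stays bounded, $\sup_k\|\Delta z_k\| = M < \infty$, I would show that almost surely $\sup_k\|\Delta e_k\| < \infty$, so that a diverging error bias cannot coexist with a bounded residue bias. The first step is to record the standing bounds the design recommendations buy us. Because each pair $(A(i),C(i))$ is observable (recommendation 4), $R>0$, and $\Gamma$ is finite, the time-varying Riccati recursion keeps $P_{k|k-1}$ uniformly bounded; hence $\mathcal{P}_k$, $\mathcal{P}_k^{1/2}$, $\mathcal{P}_k^{-1/2}$ and $K_k$ are uniformly bounded in norm, and the closed-loop predictor is uniformly exponentially stable: writing $\bar A_i = (I-K_iC_i)A_{i-1}$ and $\Phi(k,j)=\prod_{i=j+1}^{k}\bar A_i$, there are constants $c\ge 1$, $0<\rho<1$ with $\|\Phi(k,j)\|\le c\rho^{k-j}$. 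These are the standard stability facts for a uniformly observable time-varying Kalman filter.

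Next I would solve the error recursion $\Delta e_k = \bar A_k\Delta e_{k-1} - K_k Dd_k^a$ in closed-loop form,
\begin{equation*}
\Delta e_k = -\sum_{j=1}^{k}\Phi(k,j)K_j Dd_j^a,
\end{equation*}
and use the geometric kernel together with the uniform bound on $K_j$ to get $\|\Delta e_k\| \le \frac{c\sup_j\|K_j\|}{1-\rho}\sup_{j\le k}\|Dd_j^a\|$. Thus a bounded attack forces a bounded error, so $\limsup_k\|\Delta e_k\| = \infty$ can only occur when $\limsup_k\|Dd_k^a\| = \infty$; the attack itself must diverge. I would then substitute this solution into the innovation bias $\nu_k = \mathcal{P}_k^{1/2}\Delta z_k = C_kA_{k-1}\Delta e_{k-1} + Dd_k^a$ to obtain the deconvolution identity
\begin{equation*}
\nu_k = Dd_k^a - \sum_{j=1}^{k-1} H_{k,j}\,Dd_j^a, \qquad H_{k,j} = C_kA_{k-1}\Phi(k-1,j)K_j,
\end{equation*}
where $\|H_{k,j}\|\le c'\rho^{k-1-j}$. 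Boundedness of $\Delta z_k$ makes $\nu_k$ bounded, so a diverging attack is forced to satisfy the self-consistency relation $Dd_k^a = \sum_{j<k}H_{k,j}Dd_j^a + (\text{uniformly bounded})$: along the diverging subsequence each large attack sample must be reproduced, up to a bounded remainder, by a decaying-memory linear functional of its own past whose coefficients $H_{k,j}$ are fixed by the realized matrix sequence.

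The contradiction then comes from admissibility and the moving target. Since an admissible $\{Dd_k^a\}$ is a function only of $\Gamma$, the past attack inputs, and the public law $f$, it is a predetermined deterministic sequence, independent of the realized $\{(A_k,C_k)\}$, whereas the coefficients $H_{k,j}$ depend on that realization. I would show that, for a predetermined attack, the set of matrix realizations for which the cancellation above persists while $\|\Delta e_k\|\to\infty$ has probability zero, exploiting the design recommendations exactly as in Corollary \ref{cor:design}: over each constancy window of length $N\ge 2n$ (recommendation 2) the true configuration is observable (recommendation 4), so the bounded residues pin down $\Delta e$ up to the true configuration's output signature, and distinct spectra across configurations (recommendation 1), $0\notin\Lambda^i$ (recommendation 5), and probability-zero guessing of the matrix sequence (recommendation 3) prevent a single fixed attack from matching that signature on infinitely many windows.

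This last step is the main obstacle. The quantity $\Delta e_{k-1}$ is itself built from the fixed attack \emph{and} the random matrices, so the required alignment is an adapted, coupled event rather than an independence statement, and the delicate point is to show that conditioning on the (fixed) attack still leaves the cancellation a measure-zero event — which is precisely why eigenvalue separation and random switching are invoked rather than mere observability. Concluding that almost surely bounded $\Delta z$ forces bounded $\Delta e$ establishes the stated implication, and combined with the preceding theorem relating $\sum_k\Delta z_k^T\Delta z_k$ to the detection rate, this shows that any attack destabilizing the estimation error bias is detected.
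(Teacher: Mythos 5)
Your contrapositive setup and your diagnosis of where the difficulty lies are both sound, but the proposal has two genuine gaps. The first is technical: you assert uniform exponential stability of the closed-loop predictor $\bar A_k=(I-K_kC_k)A_{k-1}$, i.e.\ $\|\Phi(k,j)\|\le c\rho^{k-j}$, from observability of each $(A(i),C(i))$ alone. The standard exponential-stability result for time-varying Kalman filters needs uniform controllability from the process noise as well; the paper only assumes $Q\ge 0$, and if $Q$ is singular the gains can decay and $\bar A_k$ can inherit the instability of $A_{k-1}$, so the geometric kernel, the bound $\|\Delta e_k\|\le \frac{c\sup_j\|K_j\|}{1-\rho}\sup_j\|Dd_j^a\|$, and the decay $\|H_{k,j}\|\le c'\rho^{k-1-j}$ are all unjustified as stated. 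The paper deliberately uses only the upper bound on $P_{k|k-1}$ (which does follow from observability) and never needs filter stability.

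The second gap is the substantive one, and you name it yourself: the step showing that a predetermined admissible attack cannot maintain the cancellation $Dd_k^a=\sum_{j<k}H_{k,j}Dd_j^a+O(1)$ while $\|\Delta e_k\|\to\infty$ is exactly what the theorem asks you to prove, and ``exploiting the design recommendations exactly as in Corollary~\ref{cor:design}'' does not supply it --- that corollary is a noiseless exact-matching statement, whereas here you need a quantitative version robust to a bounded residue remainder. The paper's proof fills this in concretely: it extracts a peak subsequence $i_k$ of $\|\Delta e\|$, pushes it to the next window boundary $j_k$ using $a_m=\min\sigma_{\min}(A(j)^q)>0$ (this is where recommendation 5, $0\notin\Lambda^i$, enters --- a point your argument never uses) to show $\|A_{j_k}\Delta e_{j_k}\|\to\infty$; it then writes the attack inputs the adversary must commit to over the next window as $Dd_{j_k+1:j_k+N}=\mathcal{O}_{N,q}^{S}A_{j_k}\Delta e_{j_k}+F(q)\Delta z^{(q)}$ for each candidate configuration $q$, and shows via recommendation 1 and Theorem~\ref{theorem:det} that $\mathcal{O}_{N,q_1}^{S}-\mathcal{O}_{N,q_2}^{S}$ has trivial null space, so the inputs compatible with two different configurations differ by an unbounded amount. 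A single committed input can therefore keep the residue bounded under at most one configuration, and recommendation 3 gives infinitely many wrong guesses almost surely. Without this mechanism (or an equivalent one), your argument establishes only the easy reductions and leaves the theorem unproved.
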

\begin{proof}
Assume to the contrary that the residues are bounded $\| \Delta z_k \|  \le M$. Define the indices of a peak subsequence as follows. $i_0 = 0$, $i_k = \min {\kappa} \mbox{ such that } \kappa > i_{k-1}, \|\Delta e_{\kappa} \| > \|\Delta e_{t} \| ~ \forall t \le \kappa$. Such a sequence exists since the estimation bias is unbounded. Also define the indices $j_k$ such that $j_k = \min{\kappa} \mbox{ such that } j_k \ge i_k, ~ j_k \mod N = N-1$
Observe that 
\begin{equation}
\Delta e_k = A_{k-1} \Delta e_{k-1} - K_k \mathcal{P}_k^{\frac{1}{2}}  \Delta z_k.
\end{equation}
As a result, we have
\begin{equation*}
A_{j_{k}} \Delta e_{j_k}  = A_{i_{k}}^{j_k - i_k + 1} \Delta e_{i_k} - \sum_{t = i_k+1}^{j_k}  A_{i_{k}}^{j_k+1 -t} K_t \mathcal{P}_{t}^{\frac{1}{2}}  \Delta z_t.
\end{equation*}
Define $a_{m} > 0$ and $a_{M} > 0$ as 
\begin{equation*}
a_m \triangleq \min_{\underset{q \in \{0,\cdots,N\}}{j \in \{1,\cdots,l\}}} \sigma_{min}(A(j)^q),~~ a_{M} \triangleq \max_{\underset{q \in \{0,\cdots,N-1\}}{j \in \{1,\cdots,l\}}} \|A(j)^q\|.
\end{equation*}
where $\sigma_{min}(\cdot)$ denotes the smallest singular value. Moreover let $p_{M}$ and $c_{M}$ be given by
\begin{equation*}
p_{M} = \sup_{k} \|P_{k|k-1}\|,~~~ c_{M} = \max_{j \in \{1,\cdots,l\}} \|C(j)\|.
\end{equation*}
Observe that $a_{m}$ is nonzero since each $A(i)$ is invertible from recommendation 5. $a_{M}$ and $c_{M}$ are bounded above since we are taking the maximum over a finite set of bounded elements. Moreover, $p_{M}$ is bounded above since the error covariance is bounded above. A complete argument is omitted due to space considerations. However, since all pairs $(A(i),C(i)) \in \Gamma$ are observable from recommendation 4 it can be shown that $x_{Nk+n}, k \in \mathbb{N}$ is a linear combination of $y_{Nk:Nk+n-1}$ and $2n$ random variables, where the linear combination is dependent only on $(A(Nk),C(Nk))$. Thus, the covariance of $x_{Nk+n}$ given   $y_{0:Nk+n-1}$ is bounded. It can be shown that the covariance of $x_{Nk+n+j}, j \in \{1,\cdots,N-1\}$ is bounded given $y_{0:Nk+n+j-1}$ simply by computing predictive covariances given $y_{0:Nk+n-1}$. As a result, we have
\begin{equation*}
\| A_{j_{k}} \Delta e_{j_k} \| \ge a_{m} \|\Delta e_{i_k}\| - (N-1) a_{M} p_{M} c_{M}\frac{M}{\sqrt{\lambda_{min}(R)}}.
\end{equation*}
where $\lambda_{min}(R)$ is the smallest eigenvalue of $R$. Therefore, since $\|\Delta e_{i_k}\| \rightarrow \infty$, we have that $\| A_{j_{k}} \Delta e_{j_k} \| \rightarrow \infty.$ Now, with some abuse of notation let $Dd_{t_1:t_2}^a = \begin{bmatrix} (Dd_{t_1}^a)^T \cdots (Dd_{t_2}^a)^T \end{bmatrix}^T$. Suppose $(A_{j_{k+1}}, C_{j_{k+1}}) = (A(q_1), C(q_1))$. Then,
\begin{equation*}
Dd_{j_k+1:j_k+N} = \mathcal{O}_{N,q_1}^{S} A_{j_{k}} \Delta e_{j_k} + F_{j_k+1}(q_1) \Delta z_{j_k+1:j_k+N}^{(q_1)}.
\end{equation*}
Through a similar analysis as done above, it can shown that $\|F_{j_k+1}(q_1)\|$ is bounded above. Alternatively, if  $q_1 \neq q_2$, is chosen then
\begin{equation*}
Dd_{j_k+1:j_k+N} = \mathcal{O}_{N,q_2}^{S} A_{j_{k}} \Delta e_{j_k} + F_{j_k+1}(q_2) \Delta z_{j_k+1:j_k+N}^{(q_2)}.
\end{equation*}
Thus, to insert valid inputs for system states $q_1$ and $q_2$, we require
\begin{align*}
(\mathcal{O}_{N,q_1}^{S} - \mathcal{O}_{N,q_2}^{S}) A_{j_{k}} \Delta e_{j_k} &=  F_{j_k+1}(q_2) \Delta z_{j_k+1:j_k+N}^{(q_2)} \\ &-  F_{j_k+1}(q_1) \Delta z_{j_k+1:j_k+N}^{(q_1)}.
\end{align*}
The right hand side is bounded since $\Delta z_{j_k+1:j_k+N}$ and $F_{j_k+1}(q_1)$ are bounded.  Due to design recommendations 1 and 2 and Theorem \ref{theorem:det}, there is no solution to $0 \neq \mathcal{O}_{N,q_1}^{S}v_1 =  \mathcal{O}_{N,q_1}^{S}v_2$.  As a result, since $A_{i_{k}} \Delta e_{j_k}$ is unbounded and each $(A_k,C_k)$ pair is observable, $(\mathcal{O}_{N,q_1}^{S} - \mathcal{O}_{N,q_2}^{S})$ has no nontrivial null space and the left hand side is unbounded. Thus, there is no way for the attacker to guess incorrectly and insert bounded residues. From, recommendation 3, there is a nonzero probability the attacker guesses $(A_{j_k+1},C_{j_k+1})$ incorrectly and the result holds.
\end{proof}
\begin{remark}
From the proof of Theorem \ref{theorem:fdi}, it can be seen that the error covariance associated with the moving target remains bounded since each system state is observable. Thus, some estimation performance is guaranteed. Moreover, since the system matrices are kept constant for a period of time, the worst case error covariance, will practically be close to the worst case LTI steady state covariance associated with the pairs $(A(j),C(j)) \in \Gamma$ and $Q$ and $R$. \label{boundcov}
\end{remark}
\begin{remark}
Design recommendation 1 can be relaxed in the stochastic case for purposes of detecting false data injection attacks. In particular for all non-equal pairs $i,j \in \{1,\cdots,l\}$ we only require $0 \neq \mathcal{O}_{N,i}^{S}v \neq  \mathcal{O}_{N,j}^{S}v$ for all $v$ instead of $0 \neq \mathcal{O}_{N,i}^{S}v_i \neq  \mathcal{O}_{N,j}^{S}v_j$ for all $v_i, v_j$. Here, a big difference is that in the stochastic case we give the defender some knowledge of the distribution of the initial state.
\end{remark}

\section{Robust Estimation and Identification} \label{Stochastic2}
While the moving target approach guarantees we can detect unbounded false data injection attacks, we wish to also identify specific malicious sensors as in the deterministic case. In the remainder of this section, we construct a robust estimator. We will fuse state estimates generated by individual sensors since previous results \cite{sun2004multi,gan2001comparison} suggest such an estimator has better fault tolerance. This is desirable in our work since we are attempting to force a normally stealthy adversary to generate faults. We will show that an attacker can destabilize this estimator only if the culprit sensors can be identified. In particular, we will show that the estimation error will become unbounded only if the bias on a sensor residue is also unbounded.

To begin, we assume that for each sensor $s$,
\begin{equation}
NS(\mathcal{O}_{n,1}^{s}) = NS(\mathcal{O}_{n,2}^{s}) = \cdots = NS(\mathcal{O}_{n,l}^{s}),
\end{equation}
where $NS(A)$ denotes the null space of $A$. Such a condition is realistic since it implies that changing the system dynamics does not affect what portion of the state the sensor itself can observe. As a result, using the Kalman decomposition, for each sensor $s$, there exists a state transformation $T_s = [T_s^{uo} ~ T_s^o]$  such that $ [T_s^{uo} ~ T_s^o] \begin{bmatrix} \zeta_{k,s}^{uo~T} \zeta_{k,s}^{T} \end{bmatrix}^T = x_k$ and $ [T_s^{uo} ~ T_s^o] \begin{bmatrix} \omega_{k,s}^{1~T} \omega_{k,s}^{T} \end{bmatrix}^T = w_k$. Here,  the columns of $T_s^{uo}$ are a basis for $NS(\mathcal{O}_{n,1}^{s})$. 

Moreover, using the same transform $T_s$, there exists a  $\Gamma^s = \{(C_s(1),A_s(1)),\cdots (C_s(l),A_s(l))\}$ corresponding to $\Gamma$ such that
\begin{align}
\zeta_{k+1,s} = A_{k,s} \zeta_{k,s} + \omega_{k,s},~~~y_k^s = C_{k,s} \zeta_{k,s} + v_k^s,
\end{align}
where each pair $(A_{k,s},C_{k,s})$ is observable and belongs to $\Gamma^s$. By performing a change of variables on $\hat{x}_0^-$, a Kalman filter with bounded covariance (see remark \ref{boundcov}) can be constructed to estimate $\zeta_{k,i}$ given $y_{0:k}^i$. Specifically, define
\begin{equation}
\begin{bmatrix} - \\ \hat{\zeta}_{0,s}^{-}    \end{bmatrix}  \triangleq T_s^{-1} \hat{x}_0^- , ~~ \begin{bmatrix} - & - \\ - & Q_{s_1,s_2} \end{bmatrix} \triangleq T_{s_1}^{-1} Q T_{s_2}^{-1~T}
\end{equation}
\begin{align}
 \begin{bmatrix} - & - \\ - & P_{0|-1}^{s_1,s_2} \end{bmatrix} \triangleq  T_{s_1}^{-1} P_{0|-1} T_{s_2}^{-1~T}
\end{align}
Then we have
\begin{align}
\hat{\zeta}_{k,s} &= (I - K_{k,s} C_{k,s}) \hat{\zeta}_{k,s}^- + K_{k,s} y_k^s, \label{eq:kalmaneqs}\\
K_{k,s} &= P_{k|k-1}^{s,s} C_{k,s}^T (C_{k,s}  P_{k|k-1}^{s,s} C_{k,s}^T + R_{ss})^{-1} \nonumber \\
P_{k}^{s_1,s_2} &= (I - K_{k,s_1} C_{k,s_1}) P_{k|k-1}^{s_1,s_2} (I - K_{k,s_2} C_{k,s_2})^T 
\nonumber \\  &+ K_{k,s_1}R_{s_1s_2}K_{k,s_2}^T, \nonumber \\
P_{k+1|k}^{s_1,s_2} &= A_{k,s_1} P_{k}^{s_1,s_2} A_{k,s_2}^T + Q_{s_1,s_2},~~ \hat{\zeta}_{k+1,s}^- = A_{k,s} \hat{\zeta}_{k,s}, \nonumber \\
z_{k,s} &= (C_{k,s}  P_{k|k-1}^{s,s} C_{k,s}^T + R_{ss})^{-\frac{1}{2}}(y_k^s - C_{k,s}\hat{\zeta}_{k,s}^-), \nonumber 
\end{align}
Here $\hat{\zeta}_{k,s} = \mathbb{E}[\zeta_k | y_{0:k}^s]$, $\hat{\zeta}_{k,s}^- = \mathbb{E}[\zeta_k | y_{0:k-1}^s]$ are optimal estimates of the reduced state for sensor s. Moreover, $P_{k}^{s_1,s_2}, P_{k+1|k}^{s_1,s_2}$ are a posteriori and a priori error covariances associated with  $\hat{\zeta}_{k,s_1}, \hat{\zeta}_{k,s_2}$ and $\hat{\zeta}_{k,s_1}^-, \hat{\zeta}_{k,s_2}^-$ given by
\begin{align*}
P_{k}^{s_1,s_2} &= \mathbb{E}[e_{k,s_1} e_{k,s_2}^T | y_{0:k}^{s_1}, y_{0:k}^{s_2}], \\
P_{k|k-1}^{s_1,s_2} &= \mathbb{E}[e_{k,s_1}^- e_{k,s_2}^{- T} | y_{0:k-1}^{s_1}, y_{0:k-1}^{s_2}],
\end{align*}
where $e_{k,s} = {\zeta}_{k,s}- \hat{\zeta}_{k,s}$ and $e_{k,s}^- = {\zeta}_{k,s} - \hat{\zeta}_{k,s}^-$. We would like to use the individual state estimates $\hat{\zeta}_{k,s}$ associated with each sensor $s$ to obtain an overall state estimate of $x_k$. To do this, first define $x_{k,s}^o$ as
\begin{equation}
x_{k,s}^o = T_s^{o}\hat{\zeta}_{k,s} + \eta_{k,s} \label{eq:addnoise}
\end{equation}
where $\eta_{k,s}$ is an IID sequence of Gaussian random variables with $\eta_{k,s} \sim \mathcal{N}(\mathbf{0},\epsilon I)$ for some small $\epsilon > 0$. Moreover $\{\eta_{k,s_1}\}$ and $\{\eta_{k,s_2}\}$ are independent sequences. $\eta_{k,s}$ is a mathematical artifact introduced so the subsequent estimator has a simplified closed form and can be easily removed or mitigated by letting $\epsilon$ tend to 0. Now, we observe that
\begin{equation}
x_k = T_s^{uo} \zeta_{k,s}^{uo} + x_{k,s}^o + T_s^{o} {e}_{k,s} - \eta_{k,s}.
\end{equation}
From here we obtain
\begin{align}
\mathbf{\hat{y}_k} &= W \mathbf{x_k} + \mathbf{\eta_k}, \label{eq:fullobserv} \\
\mathbf{\hat{y}_k} &= \begin{bmatrix}  x_{k,1}^o \\ x_{k,2}^o \\ \cdots \\ x_{k,m}^o \end{bmatrix}, ~ \mathbf{x_k} = \begin{bmatrix} \zeta_{k,1}^{uo} \\ \zeta_{k,2}^{uo} \\ \vdots \\ \zeta_{k,m}^{uo} \\ x_k \end{bmatrix},~\mathbf{\eta_k} = 
\begin{bmatrix} -T_1^{o} e_{k,1} + \eta_{k,1} \\ -T_2^{o} e_{k,2} + \eta_{k,2} \\ \vdots \\ -T_m^{o} e_{k,1} + \eta_{k,m} \end{bmatrix}, \nonumber \\
W & = \begin{bmatrix} -T_1^{uo} & \mathbf{0} & \cdots & \mathbf{0} & I \\ \mathbf{0} & -T_2^{uo} & \cdots & \mathbf{0} & I \\ 
\vdots & \vdots & \ddots & \vdots & \vdots \\ \mathbf{0} & \mathbf{0} & \cdots & -T_m^{uo} & I \end{bmatrix}
\begin{bmatrix} \zeta_{k,1}^{uo} \\ \zeta_{k,2}^{uo} \\ \vdots \\ \zeta_{k,m}^{uo} \\ x_k \end{bmatrix}. \nonumber
\end{align}
It can be seen that $\mathbf{\eta_k}$ is normally distributed so that $\mathbf{\eta_k} \sim \mathcal{N}({0},\mathcal{Q})$, where $\mathcal{Q} > 0$ consists of $m \times m$ blocks where the $(i,j)$ block is given by $(T_i^o P_{k}^{i,j} T_j^{oT}+ \delta_{ij}\epsilon I)$. Here, $\delta_{ij}$ is the Kronecker delta. The minimum variance unbiased estimate (MVUB) \cite{scharf} of $\mathbf{x}_k$ given $\mathbf{\hat{y}_k}$ is given by 
\begin{equation}
\mathbf{\hat{x}}_{k} = (W^T \mathcal{Q}^{-1} W)^{-1} W^T \mathcal{Q}^{-1} \mathbf{\hat{y}_k} \label{eq:optest}
\end{equation}
The last $n$ entries of $\mathbf{\hat{x}_{k}}$, denoted as $\hat{x}_k^*$, constitute a (MVUB) estimate of $x_k$ given the set of sensor estimates $\mathbf{\hat{{y}}_k}$. The covariance of this estimate is given by
\begin{equation}
\mbox{cov}(\mathbf{x}_k - \mathbf{\hat{x}}_{k}) = (W^T \mathcal{Q}^{-1} W)^{-1}.
\end{equation}
The proposed estimator is well defined since $NS(W) = 0$. If, $W$ had a nontrivial null space, this would imply there exists $x \neq 0$ and $\zeta_1, \cdots, \zeta_m$ such that 
\begin{equation}
T_{i}^{uo} \zeta_i = x, ~~~ \forall~i \in \{1,\cdots,m\}.
\end{equation}
This would imply that $\cap_{i=1}^m NS(\mathcal{O}_{n,1}^i) \neq 0$, which contradicts the observability of each pair $(A_k,C_k)$. We next show that the proposed estimator of $x_k$ has bounded covariance.
\begin{theorem}
Consider the estimator of $x_k$ defined by \eqref{eq:kalmaneqs},\eqref{eq:addnoise},\eqref{eq:fullobserv},\eqref{eq:optest}. The estimator has bounded covariance. \label{theorem:boundedcov}
\end{theorem}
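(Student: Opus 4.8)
The plan is to bound the full estimation covariance $(W^T \mathcal{Q}^{-1} W)^{-1}$ from above, since the covariance of the $x_k$-estimate $\hat{x}_k^*$ is the bottom-right $n \times n$ principal block of this matrix, and a principal submatrix of a bounded positive semidefinite matrix is itself bounded (by Cauchy interlacing, its largest eigenvalue does not exceed that of the full matrix).

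First I would observe that $W$ is a fixed matrix: it is assembled from the constant blocks $T_s^{uo}$ and identity blocks, and the equality of null spaces $NS(\mathcal{O}_{n,1}^s) = \cdots = NS(\mathcal{O}_{n,l}^s)$ guarantees that the Kalman decomposition $T_s$, and hence $W$, does not vary with the switching sequence. As already argued after \eqref{eq:optest}, $NS(W) = 0$, so $W^T W \succ 0$ and $\lambda_{\min}(W^T W)$ is a strictly positive constant. Using $\mathcal{Q}^{-1} \succeq \|\mathcal{Q}\|^{-1} I$ gives $W^T \mathcal{Q}^{-1} W \succeq \|\mathcal{Q}\|^{-1} W^T W$, whence $\lambda_{\min}(W^T \mathcal{Q}^{-1} W) \geq \lambda_{\min}(W^T W)/\|\mathcal{Q}\|$ and therefore $\|(W^T \mathcal{Q}^{-1} W)^{-1}\| \leq \|\mathcal{Q}\|/\lambda_{\min}(W^T W)$. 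The entire problem thus reduces to showing $\sup_k \|\mathcal{Q}\| < \infty$.

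Since $\mathcal{Q}$ is the $m \times m$ block matrix whose $(i,j)$ block is $T_i^o P_k^{i,j} T_j^{oT} + \delta_{ij}\epsilon I$ and the transforms $T_i^o$ are constant, it suffices to bound every block $P_k^{i,j}$ uniformly in $k$. The diagonal blocks $P_k^{s,s}$ are the error covariances of the reduced-order Kalman filters \eqref{eq:kalmaneqs}; because each pair $(A_{k,s}, C_{k,s})$ lies in the finite observable family $\Gamma^s$ and the matrices are held fixed over windows of length $N \geq 2n$, the reduced subsystem is uniformly completely observable and the filter covariance stays uniformly bounded. This is precisely the argument of Remark \ref{boundcov} applied to the observable subsystem, and the bound can be taken as the worst case over the finitely many LTI steady-state covariances. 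For the off-diagonal blocks I would use that $\left[ \begin{smallmatrix} P_k^{i,i} & P_k^{i,j} \\ P_k^{j,i} & P_k^{j,j} \end{smallmatrix} \right]$ is a genuine covariance matrix, hence positive semidefinite, which yields $\|P_k^{i,j}\|^2 \leq \|P_k^{i,i}\| \, \|P_k^{j,j}\|$; the diagonal bounds then propagate to all cross terms.

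Combining these, $\|\mathcal{Q}\| \le m\bigl(\max_i \|T_i^o\|^2 \, \sup_{k,i,j}\|P_k^{i,j}\| + \epsilon\bigr)$ is finite, so $(W^T \mathcal{Q}^{-1} W)^{-1}$ is bounded and the claim follows. The main obstacle is the uniform boundedness of the diagonal blocks $P_k^{s,s}$: one must justify that switching among a finite family of observable pairs, each held fixed over a window of at least $2n \ge n$ steps, keeps each reduced-order Kalman filter's covariance bounded. This is a uniform-observability argument for time-varying Kalman filters rather than a one-line computation. The remaining reductions — the positive-semidefiniteness bound on the cross-covariances and the $\mathcal{Q}^{-1} \succeq \|\mathcal{Q}\|^{-1} I$ step — are routine once that bound is in hand.
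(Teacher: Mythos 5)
Your proof is correct and shares its essential content with the paper's: both arguments reduce the claim to the uniform boundedness of $\mathcal{Q}$, which in turn rests on the boundedness of the local filter covariances $P_k^{i,j}$ via the uniform-observability argument already invoked in the proof of Theorem \ref{theorem:fdi} (you correctly flag this as the real work, and your handling of the off-diagonal blocks via positive semidefiniteness of the joint covariance makes explicit a step the paper leaves implicit). Where you diverge is the final inversion step. The paper compares the MVUB estimator to the unweighted least-squares estimator $(W^TW)^{-1}W^T\mathbf{\hat{y}_k}$, whose covariance $(W^TW)^{-1}W^T\mathcal{Q}W(W^TW)^{-1}$ is manifestly bounded, and concludes via the optimality property $\mbox{tr}\left((W^T\mathcal{Q}^{-1}W)^{-1}\right) \le \mbox{tr}\left((W^TW)^{-1}W^T\mathcal{Q}W(W^TW)^{-1}\right)$. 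You instead use the Loewner-order bound $\mathcal{Q}^{-1} \succeq \|\mathcal{Q}\|^{-1}I$ together with $NS(W)=0$ to get $\|(W^T\mathcal{Q}^{-1}W)^{-1}\| \le \|\mathcal{Q}\|/\lambda_{\min}(W^TW)$ directly. Your route is slightly more elementary (no appeal to estimator optimality) and yields an explicit operator-norm bound rather than a trace bound; the paper's route is arguably more natural in the estimation-theoretic framing and reuses machinery already on the page. Both are sound.
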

\begin{proof}
We first prove that $\mathcal{Q} > 0$ is bounded above. The $i$th diagonal block of $\mathcal{Q}$ has covariance $(T_i^o P_{k}^{i,i} T_i^{oT}+ \epsilon I).$ Using the same argument as in the proof of Theorem \ref{theorem:fdi}, we see that $P_{k}^{i,i}$ is bounded. Consequently $(T_i^o P_{k}^{i,i} T_i^{oT}+ \epsilon I)$ and $\mathcal{Q}$ are bounded.

Next consider $\mathbf{\hat{x}_k^{uw}} = (W^TW)^{-1}W^T \mathbf{\hat{y}_k}$. Since $\mathbf{x}_k - \mathbf{\hat{x}_k^{uw}} = -(W^TW)^{-1}W^T \mathbf{\eta_k}$, $\mathbf{\hat{x}_k^{uw}}$ is an unbiased estimator of $\mathbf{x}_k$ with covariance $(W^TW)^{-1}W^T \mathcal{Q} W (W^TW)^{-1}$. Since $W$ is fixed and $\mathcal{Q} > 0$ is bounded above, $(W^TW)^{-1}W^T \mathcal{Q} W (W^TW)^{-1}$ is also bounded above.

Finally, since the proposed estimator is MVUB, we see 
\begin{equation*}
\mbox{tr}\left((W^T \mathcal{Q}^{-1} W)^{-1}\right) \le \mbox{tr}\left((W^TW)^{-1}W^T \mathcal{Q} W (W^TW)^{-1}\right).
\end{equation*}
Thus, $\mbox{cov}(\mathbf{x_k} - \mathbf{\hat{x}}_{k})$ and the covariance of $x_k$ defined by the last $n \times  n$ block of $\mbox{cov}(\mathbf{x_k} - \mathbf{\hat{x}_{k}})$ are bounded.
\end{proof}

To close this section, we demonstrate that the proposed estimator is sensitive to biases in individual residues $\Delta z_{k,s}$, specifically showing that an infinite bias introduced into the estimator implies that the residues are also infinite. Define $\mathbf{e_k} \triangleq \mathbf{x_k} - \mathbf{\hat{x}_{k}}$ and $\Delta \mathbf{e}_k$ as the bias inserted on $\mathbf{e_k}$ due to the adversary's inputs. Moreover, let $ e_k^* = x_k - \hat{x}_k^*$ and let $\Delta e_k^*$ and $\Delta e_{k,i}$ be the bias inserted on $e_k^*$ and $e_{k,i}$ respectively due to the adversary's inputs. We have the following result.
\begin{theorem}
Consider the estimator of $x_k$ defined by \eqref{eq:kalmaneqs},\eqref{eq:addnoise},\eqref{eq:fullobserv},\eqref{eq:optest}. Then $\limsup_{k \rightarrow \infty} \| \Delta e_k^* \| = \infty \implies \limsup_{k \rightarrow \infty} \| \Delta z_k^i \| = \infty$ for some $i \in \{1, \cdots, m\}$.
\end{theorem}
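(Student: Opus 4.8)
The plan is to reduce the fused estimate back to the individual per-sensor filters and then invoke Theorem \ref{theorem:fdi} on each reduced single-output system. First I would relate the bias $\Delta e_k^*$ on the fused estimate to the biases $\Delta e_{k,s}$ on the individual reduced-state estimates. Since the adversary can alter neither the true state nor the artificial noise $\eta_{k,s}$, only the measurement bias propagates through each Kalman filter in \eqref{eq:kalmaneqs}, so that $\Delta \hat{\zeta}_{k,s} = -\Delta e_{k,s}$ and hence the bias on each block $x_{k,s}^o$ is $-T_s^o \Delta e_{k,s}$. Stacking these, the bias on $\mathbf{\hat{y}_k}$ is the vector with blocks $-T_s^o \Delta e_{k,s}$, and from \eqref{eq:optest} the bias on the full fused vector satisfies $\|\Delta \mathbf{e}_k\| = \|(W^T\mathcal{Q}^{-1}W)^{-1}W^T\mathcal{Q}^{-1}\Delta \mathbf{\hat{y}_k}\|$.

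The next step is a uniform bound on the fusion map. As established in the proof of Theorem \ref{theorem:boundedcov}, $\mathcal{Q}$ is bounded above and satisfies $\mathcal{Q} \ge \epsilon I$, so $\mathcal{Q}^{-1}$ is bounded above and below; together with $NS(W) = 0$ (so that $W^TW > 0$), this makes $(W^T\mathcal{Q}^{-1}W)^{-1}W^T\mathcal{Q}^{-1}$ have operator norm bounded by a constant independent of $k$. Consequently $\|\Delta e_k^*\| \le \|\Delta \mathbf{e}_k\| \le c \max_s \|\Delta e_{k,s}\|$ for some fixed $c$. Because there are finitely many sensors, $\limsup_k \max_s \|\Delta e_{k,s}\| = \max_s \limsup_k \|\Delta e_{k,s}\|$, so $\limsup_k \|\Delta e_k^*\| = \infty$ forces the existence of at least one sensor $s$ with $\limsup_k \|\Delta e_{k,s}\| = \infty$.

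For this sensor I would apply Theorem \ref{theorem:fdi} to the reduced single-output system $(A_{k,s},C_{k,s}) \in \Gamma^s$ driven by the single attack $D^s d_k^a$, with normalized residue $z_{k,s}$ from \eqref{eq:kalmaneqs}. The reduced biases obey exactly the recursion underlying Theorem \ref{theorem:fdi}, namely $\Delta e_{k,s} = A_{k-1,s}\Delta e_{k-1,s} - K_{k,s}\mathcal{P}_{k,s}^{1/2}\Delta z_{k,s}$ with $\mathcal{P}_{k,s} = C_{k,s}P_{k|k-1}^{s,s}C_{k,s}^T + R_{ss}$, so the per-sensor conclusion $\limsup_k \|\Delta e_{k,s}\| = \infty \Rightarrow \limsup_k \|\Delta z_{k,s}\| = \infty$ follows, and taking $i = s$ completes the chain.

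The main obstacle is verifying that the hypotheses of Theorem \ref{theorem:fdi} genuinely transfer to the reduced family $\Gamma^s$. Here I would argue that each reduced pair $(A_s(j),C_s(j))$ is observable by construction, so its filter has bounded covariance $P_{k|k-1}^{s,s}$ by the observability argument used in the proof of Theorem \ref{theorem:fdi}, and that the spectrum of $A_s(j)$ is a subset of the spectrum of $A(j)$. The latter inclusion is what lets recommendation 1 (pairwise-disjoint spectra across configurations) and recommendation 5 ($0 \notin \Lambda^i$, i.e.\ invertibility) be inherited by $\Gamma^s$, which is precisely the structure needed for the moving-target guessing argument to run on the single-sensor system: an admissible adversary who guesses the constant configuration incorrectly cannot hold $\Delta z_{k,s}$ bounded while driving $\Delta e_{k,s}$ to infinity, and by recommendation 3 he guesses incorrectly with nonzero probability. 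Establishing this inheritance carefully, rather than the routine norm bounds, is the delicate part of the argument.
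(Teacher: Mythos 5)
Your proposal is correct and follows essentially the same route as the paper's own proof: express the fused bias as $\Delta\mathbf{e}_k=(W^T\mathcal{Q}^{-1}W)^{-1}W^T\mathcal{Q}^{-1}T_{diag}\,\Delta e_{k,S}$, bound the norm of this fusion operator using the boundedness of $\mathcal{Q}$ and $NS(W)=0$, conclude by finiteness of the sensor set that some $\Delta e_{k,i}$ is unbounded, and then invoke Theorem~\ref{theorem:fdi} on the reduced per-sensor system. Your added verification that the design recommendations (observability, disjoint and nonzero spectra) are inherited by the reduced family $\Gamma^s$ via the Kalman decomposition is a point the paper leaves implicit, and it is handled correctly.
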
 
\begin{proof}
First, we observe that \\ $\mathbf{e}_k = -(W^T \mathcal{Q}^{-1} W)^{-1} W^T \mathcal{Q}^{-1} \mathbf{\eta_k}$. As a result,
\begin{equation}
\Delta \mathbf{e}_k = (W^T \mathcal{Q}^{-1} W)^{-1} W^T \mathcal{Q}^{-1} T_{diag} \Delta e_{k,S} .\label{eq:estbias}
\end{equation}
where 
\begin{equation*}
T_{diag} = \begin{bmatrix} T_{1}^{o} & \mathbf{0} & \cdots & \mathbf{0} \\ \mathbf{0} & T_{2}^{o} & \cdots & \mathbf{0} 
\\ \vdots & \vdots & \ddots & \vdots \\ \mathbf{0} & \mathbf{0} & \cdots & T_{m}^o \end{bmatrix},~~
\Delta e_{k,S} = \begin{bmatrix} \Delta e_{k,1} \\ \Delta e_{k,2} \\ \vdots \\ \Delta e_{k,m} \end{bmatrix}.
\end{equation*}
Next, we will show that $K^* = (W^T \mathcal{Q}^{-1} W)^{-1} W^T \mathcal{Q}^{-1} T_{diag}$ has bounded norm. In particular observe that
\begin{align*}
\| K^* \| &=  \|(W^T \frac{\mathcal{Q}^{-1}}{\|\mathcal{Q}^{-1}\|} W)^{-1} W^T \frac{\mathcal{Q}^{-1}}{\|\mathcal{Q}^{-1}\|} T_{diag} \|  \\
&\le \|(W^T \frac{\mathcal{Q}^{-1}}{\|\mathcal{Q}^{-1}\|} W)^{-1}\| \|W^T \frac{\mathcal{Q}^{-1}}{\|\mathcal{Q}^{-1}\|} T_{diag} \|.
\end{align*}
Clearly, $\|W^T \frac{\mathcal{Q}^{-1}}{\|\mathcal{Q}^{-1}\|} T_{diag} \|$ has bounded norm. Moreover, using a similar argument as in the proof of Theorem \ref{theorem:boundedcov}, $ \|(W^T \frac{\mathcal{Q}^{-1}}{\|\mathcal{Q}^{-1}\|} W)^{-1}\| $ has bounded norm. Thus, $\|K^*\|$ is bounded. Consequently, from \eqref{eq:estbias}, $\limsup_{k \rightarrow \infty} \| \Delta e_k^* \| = \infty \implies \limsup_{k \rightarrow \infty} \| \Delta e_{k,i} \| = \infty$ for some $i \in \{1,\cdots,m\}$. However, from Theorem \ref{theorem:fdi}, this implies $\limsup_{k \rightarrow \infty} \| \Delta z_{k}^i \| = \infty$ and the result holds.
\end{proof}
While the proposed estimator does not guarantee each malicious sensor will be identified, it does guarantee that the defender will be able to identify and remove sensors whose attacks cause unbounded bias in the estimation error simply by analyzing each sensor's measurements individually. This is due to the fact that the bias on residues of such sensors will grow unbounded, which can be easily detected by some $\chi^2$ detector. As a result, for each individual sensor $s$, we propose the following detector at time $k$, which can be used to identify malicious behavior,
\begin{equation}
\sum_{j = k-T^*+1}^k z_{j,s}^2 \overset{{H_1^s}}{\underset{H_0^s}{ \gtrless}} \gamma \label{chi}.
\end{equation}
In this scenario, $H_1^s$ is the hypothesis that sensor $s$ is malfunctioning and $H_0^s$ is the hypothesis that sensor $s$ is working normally. In practice a sensor $s$ who repeatedly fails detection can be removed from consideration when obtaining a state estimate and the proposed fusion based estimation scheme can be adjusted accordingly.

\section{Numerical Example} \label{Simulation}
We consider a numerical example where $l = 7$ and $A(j)$ and $C(j)$ are given by
\begin{equation*}
A(k) = \begin{bmatrix} A_{11}(j) & A_{12}(j) & 0 & 0 & 0 \\ 0 & A_{22}(j) & 0 & A_{24}(j) & 0 \\ 0 & 0 & A_{33}(j) & 0 & A_{35}(j) \\ 0 & 0 & 0 & A_{44}(j) & A_{45}(j) \\ 0 & 0 & 0 & 0 & A_{55}(j) \end{bmatrix},
\end{equation*}
\begin{equation*}
C(j) = \begin{bmatrix} C_1(j) \\ C_2(j) \end{bmatrix},
\end{equation*}
\begin{equation*}
  C_i(j) = \begin{bmatrix} C_{1,i}(j) & 0 & 0 & 0 & 0 \\ 0 & C_{2,i}(j) & 0 & 0 & 0 \\ 0 & 0 & C_{3,i}(j) & 0 & 0 \\ 0 & 0 & 0 & C_{4,i}(j) & 0 \\ 0 & 0 & 0 & 0 & C_{5,i}(j) \end{bmatrix}.
\end{equation*}
where $A_{ij}(j) \in \mathbb{R}^{3 \times 3}$ and $C_{i,j}(j) \in \mathbb{R}^{1 \times 3}$ are scaled uniformly random matrices with $A_{ii}(j)$ unstable. Moreover $Q$ and $R$ are appropriately sized matrices generated by multiplying a uniform random matrix by its transpose. The system matrices are changed independently and randomly every $2n$ time steps where $n = 15$ and each $(A(j),C(j))$ pair has equal likelihood.

We assume that the adversary biases the last $5$ sensors (measured by $C_2(j)$) by performing the attack formulated in Theorem \ref{MTImage}. Here, the attacker guesses the system matrices randomly every $2n$ time steps and $x_0^*$ is chosen identically for each sensor. A $\chi^2$ detector \eqref{chi} with window $5$ and false alarm probability $\alpha_k^i = 6.9 \times 10^{-8}$ is implemented for each sensor based on their local Kalman filters. A centralized $\chi^2$ detector with window $3$ derived from the optimal centralized Kalman filter performs detection with false alarm probability $\alpha_k = 4.2 \times 10^{-4}$.

In Fig. \ref{MSE}, we compared the optimal Kalman filter and the proposed fusion based estimator under normal operation by plotting their expected mean squared error as a function of time. The estimators achieve similar performance, with the average mean squared of the optimal Kalman filter at 22.9 and the average mean squared error of the proposed estimator at 23.4.
\begin{figure}
\begin{center}
\includegraphics[scale=0.43]{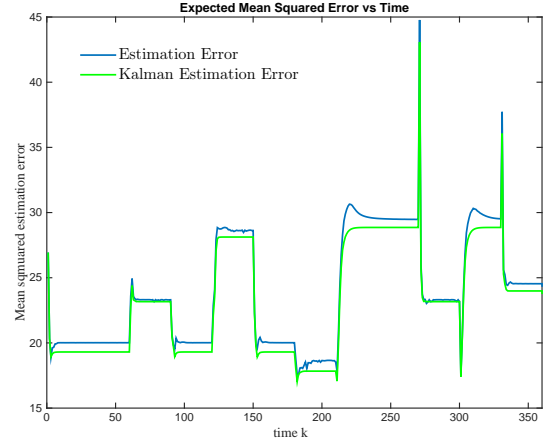}
\end{center}
\caption{Expected mean squared error vs time under normal operation for proposed fusion based estimator and the centralized Kalman filter.}
\label{MSE}
\end{figure}

In Fig. \ref{unstableerror} and \ref{residues}, we consider the system with the moving target under attack. However, we assume the attacker is aware of the exact sequence of time varying matrices. As such the attacker is able to destabilize the estimation error in Fig. \ref{unstableerror} while the sensor residues appear normal in Fig. \ref{residues}.

\begin{figure}
\begin{center}
\includegraphics[scale=0.43]{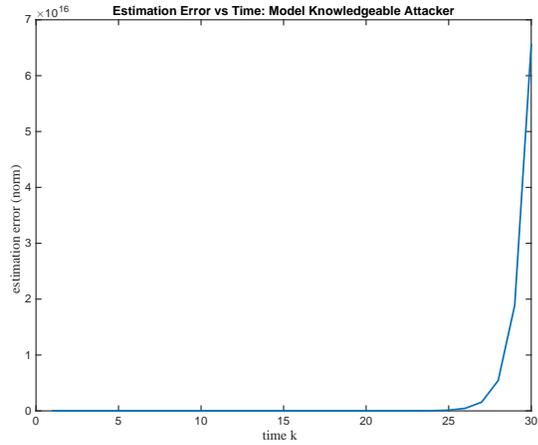}
\end{center}
\caption{Estimation error vs time under attack when the adversary knows the system dynamics. }
\label{unstableerror}
\end{figure}

\begin{figure}
\begin{center}
\includegraphics[scale=0.43]{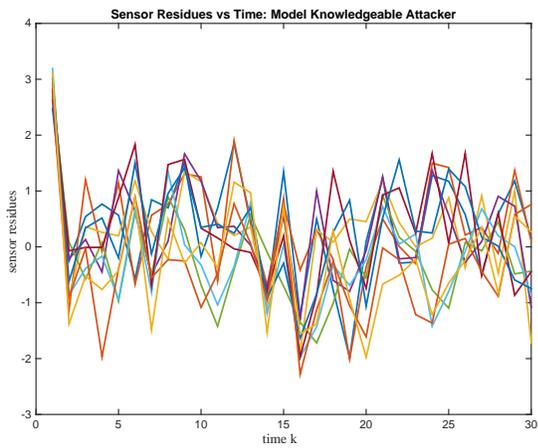}
\end{center}
\caption{Residue vs time under attack when the adversary knows the system dynamics.  }
\label{residues}
\end{figure}

Finally, in Fig. \ref{identfig}, we plot the norm of the estimation error for both the proposed estimator and optimal Kalman filter as a function of time when the attacker is forced to randomly guess the system model. Here the attacker is detected in $2$ time steps and perfectly identified in $10$ time steps. When a sensor is identified, it is removed from consideration when performing fusion or optimal Kalman filtering. It can be seen that while under attack, the proposed fusion based estimator is better able to recover from the adversary's actions.

\begin{figure}
\begin{center}
\includegraphics[scale=0.43]{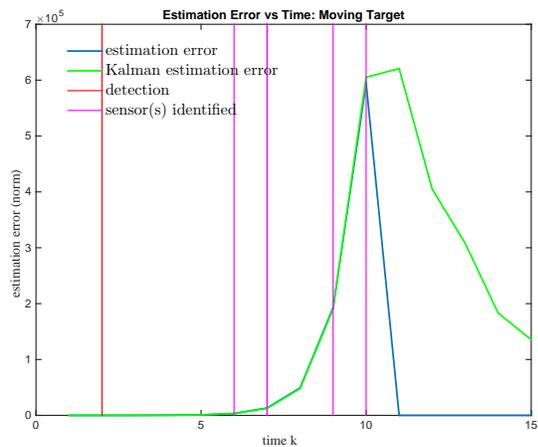}
\end{center}
\caption{Estimation error vs time under attack when the adversary does not know the true dynamics. All sensor attacks are identified. The proposed fusion based estimator and the centralized Kalman filter are illustrated.}
\label{identfig}
\end{figure}

\section{Conclusion} \label{Conclusion}
In this paper, we consider a moving target approach for identifying sensor attacks in control systems. We first considered the deterministic case and examined design considerations which ensure perfect identification. We then considered the stochastic scenario and showed that the proposed solution can effectively prevent infinite false data injection attacks. Finally, we constructed a robust estimator, which allows us to identify sensors which cause unbounded estimation error. Future work consists in examining and evaluating a larger class of estimators in the stochastic case including the traditional Kalman filter. Furthermore, we wish to investigate the amount of bias the attacker can introduce into the estimation error while avoiding identification with tests on real systems. Finally, we wish to consider more powerful adaptive adversaries who observe a subset of sensor measurements and perform system identification.

\bibliographystyle{IEEEtran}
\bibliography{identification_cps2}

\begin{thebibliography}{10}
\providecommand{\url}[1]{#1}
\csname url@samestyle\endcsname
\providecommand{\newblock}{\relax}
\providecommand{\bibinfo}[2]{#2}
\providecommand{\BIBentrySTDinterwordspacing}{\spaceskip=0pt\relax}
\providecommand{\BIBentryALTinterwordstretchfactor}{4}
\providecommand{\BIBentryALTinterwordspacing}{\spaceskip=\fontdimen2\font plus
\BIBentryALTinterwordstretchfactor\fontdimen3\font minus
  \fontdimen4\font\relax}
\providecommand{\BIBforeignlanguage}[2]{{%
\expandafter\ifx\csname l@#1\endcsname\relax
\typeout{** WARNING: IEEEtran.bst: No hyphenation pattern has been}%
\typeout{** loaded for the language `#1'. Using the pattern for}%
\typeout{** the default language instead.}%
\else
\language=\csname l@#1\endcsname
\fi
#2}}
\providecommand{\BIBdecl}{\relax}
\BIBdecl

\bibitem{Cardenas:2008ke}
A.~A. C{\'a}rdenas, S.~Amin, and S.~S. Sastry, ``{Secure Control: Towards
  Survivable Cyber-Physical Systems},'' in \emph{Distributed Computing Systems
  Workshops, 2008. ICDCS '08. 28th International Conference on DOI -
  10.1109/ICDCS.Workshops.2008.40}.\hskip 1em plus 0.5em minus 0.4em\relax
  IEEE, 2008, pp. 495--500.

\bibitem{Langner2013}
\BIBentryALTinterwordspacing
R.~Langner, ``To kill a centrifuge: A technical analysis of what {S}tuxnet's
  creators tried to achieve,'' Langner Communications, Tech. Rep., November
  2013. [Online]. Available:
  \url{www.langner.com/en/wp-content/uploads/2013/11/To-kill-a-centrifuge.pdf}
\BIBentrySTDinterwordspacing

\bibitem{Slay2008}
J.~Slay and M.~Miller, ``Lessons learned from the {M}aroochy water breach,'' in
  \emph{Critical Infrastructure Protection}.\hskip 1em plus 0.5em minus
  0.4em\relax Springer US, 2008, pp. 73--82.

\bibitem{teixeira2012}
A.~Teixeira, D.~Perez, H.~Sandberg, and K.~H. Johannson, ``Attack models and
  scenarios for networked control systems,'' in \emph{Proceedings of the 1st
  international conference on High Confidence Networked Systems}, Beijing,
  China, 2012, pp. 55--64.

\bibitem{PasqualettiJournal}
F.~Pasqualetti, F.~Dorfler, and F.~Bullo, ``Attack detection and identification
  in cyber-physical systems,'' \emph{IEEE Transactions on Automatic Control},
  vol.~58, no.~11, pp. 2715--2729, 2013.

\bibitem{liu2009}
Y.~Liu, M.~Reiter, and P.~Ning, ``False data injection attacks against state
  estimation in electric power grids,'' in \emph{Proceedings of the 16th ACM
  conference on computer and communications security}, Chicago, IL, 2009.

\bibitem{moscs10security}
Y.~Mo and B.~Sinopoli, ``False data injection attacks in cyber physical
  systems,'' in \emph{First Workshop on Secure Control Systems}, Stockholm,
  Sweden, April 2010.

\bibitem{Smith2011}
R.~Smith, ``A decoupled feedback structure for covertly appropriating network
  control systems,'' in \emph{IFAC World Congress}, Milan, Italy, 2011, pp.
  90--95.

\bibitem{Mo2009R}
Y.~Mo and B.~Sinopoli, ``Secure control against replay attacks,'' in \emph{47th
  Annual Allerton Conference on Communication, Control, and Computing}, Sept
  2009, pp. 911--918.

\bibitem{teixeira2012pt2}
A.~Teixeira, I.~Shames, H.~Sandberg, and K.~H. Johansson, ``Revealing stealthy
  attacks in control systems,'' in \emph{50th Annual Allerton Conference on
  Communication, Control, and Computing}, Monticello, Illinois, 2012, pp.
  1806--1813.

\bibitem{Mo2014}
Y.~Mo, S.~Weerakkody, and B.~Sinopoli, ``Physical authentication of control
  systems{:} designing watermarked control inputs to detect counterfeit sensor
  outputs,'' \emph{IEEE Control Systems Magazine}, vol.~35, no.~1, pp. 93 --
  109, 2015.

\bibitem{Weerakkody2014}
S.~Weerakkody, Y.~Mo, and B.~Sinopoli, ``Detecting integrity attacks on control
  systems using robust physical watermarking,'' in \emph{53rd IEEE Conference
  on Decision and Control}, Los Angeles, California, 2014, pp. 3757--3764.

\bibitem{Miao2014}
F.~Miao, Q.~Zhu, M.~Pajic, and G.~J. Pappas, ``Coding sensor outputs for
  injection attacks detection,'' in \emph{53rd IEEE Conference on Decision and
  Control}, Los Angeles, California, 2014, pp. 5776--5781.

\bibitem{yuan2015security}
Y.~Yuan and Y.~Mo, ``Security in cyber-physical systems: Controller design
  against known-plaintext attack,'' in \emph{54th IEEE Conference on Decision
  and Control}.\hskip 1em plus 0.5em minus 0.4em\relax IEEE, 2015, pp.
  5814--5819.

\bibitem{sundaram2010wireless}
S.~Sundaram, M.~Pajic, C.~N. Hadjicostis, R.~Mangharam, and G.~J. Pappas, ``The
  wireless control network: monitoring for malicious behavior,'' in \emph{49th
  IEEE Conference on Decision and Control}.\hskip 1em plus 0.5em minus
  0.4em\relax IEEE, 2010, pp. 5979--5984.

\bibitem{fawzi2011secure}
H.~Fawzi, P.~Tabuada, and S.~Diggavi, ``Secure state-estimation for dynamical
  systems under active adversaries,'' in \emph{49th Annual Allerton Conference
  on Communication, Control, and Computing}.\hskip 1em plus 0.5em minus
  0.4em\relax IEEE, 2011, pp. 337--344.

\bibitem{fawzi2014secure}
------, ``Secure estimation and control for cyber-physical systems under
  adversarial attacks,'' \emph{IEEE Transactions on Automatic Control},
  vol.~59, no.~6, pp. 1454--1467, 2014.

\bibitem{chong2015observability}
M.~S. Chong, M.~Wakaiki, and J.~P. Hespanha, ``Observability of linear systems
  under adversarial attacks,'' in \emph{2015 American Control
  Conference}.\hskip 1em plus 0.5em minus 0.4em\relax IEEE, 2015, pp.
  2439--2444.

\bibitem{shoukry2013event}
Y.~Shoukry and P.~Tabuada, ``Event-triggered state observers for sparse sensor
  noise/attacks,'' \emph{IEEE Transactions on Automatic Control}, vol.~16,
  no.~8, pp. 2079--2091, 2016.

\bibitem{mishra2015secure}
S.~Mishra, Y.~Shoukry, N.~Karamchandani, S.~Diggavi, and P.~Tabuada, ``Secure
  state estimation: optimal guarantees against sensor attacks in the presence
  of noise,'' in \emph{2015 IEEE International Symposium on Information Theory
  (ISIT)}.\hskip 1em plus 0.5em minus 0.4em\relax IEEE, 2015, pp. 2929--2933.

\bibitem{nakahira2015dynamic}
Y.~Nakahira and Y.~Mo, ``Dynamic state estimation in the presence of
  compromised sensory data,'' in \emph{54th IEEE Conference on Decision and
  Control}.\hskip 1em plus 0.5em minus 0.4em\relax IEEE, 2015, pp. 5808--5813.

\bibitem{pajic2014robustness}
M.~Pajic, J.~Weimer, N.~Bezzo, P.~Tabuada, O.~Sokolsky, I.~Lee, and G.~J.
  Pappas, ``Robustness of attack-resilient state estimators,'' in \emph{5th
  ACM/IEEE International Conference on Cyber-Physical Systems (with CPS Week
  2014)}.\hskip 1em plus 0.5em minus 0.4em\relax IEEE Computer Society, 2014,
  pp. 163--174.

\bibitem{weerakkody2015detecting}
S.~Weerakkody and B.~Sinopoli, ``Detecting integrity attacks on control systems
  using a moving target approach,'' in \emph{54th IEEE Conference on Decision
  and Control}.\hskip 1em plus 0.5em minus 0.4em\relax IEEE, 2015, pp.
  5820--5826.

\bibitem{weerakkody2016informationflow}
S.~Weerakkody, B.~Sinopoli, S.~Kar, and A.~Datta, ``Information flow for
  security in control systems,'' in \emph{To appear, 55th IEEE Conference on
  Decision and Control}.\hskip 1em plus 0.5em minus 0.4em\relax IEEE, 2016.

\bibitem{sun2004multi}
S.-L. Sun and Z.-L. Deng, ``Multi-sensor optimal information fusion kalman
  filter,'' \emph{Automatica}, vol.~40, no.~6, pp. 1017--1023, 2004.

\bibitem{gan2001comparison}
Q.~Gan and C.~J. Harris, ``Comparison of two measurement fusion methods for
  kalman-filter-based multisensor data fusion,'' \emph{IEEE Transactions on
  Aerospace and Electronic systems}, vol.~37, no.~1, pp. 273--279, 2001.

\bibitem{scharf}
L.~L. Scharf, \emph{Statistical signal processing}.\hskip 1em plus 0.5em minus
  0.4em\relax Addison-Wesley Reading, MA, 1991, vol.~98.

\end{thebibliography}

\section{Appendix: Proof of Theorem 12}
We first proof sufficiency. Suppose $\exists \lambda \in \Lambda^1 \cap \Lambda^2$ and $\alpha_1 \in \mathbb{C}^{\sum_i{r_i(\lambda,1)}},\alpha_2 \in \mathbb{C}^{\sum_i{r_i(\lambda,2)}}$ such that 
\begin{equation}
\mathcal{V}_s^{\lambda,1}\alpha_1 =  \mathcal{V}_s^{\lambda,2}\alpha_2 \neq 0. \label{eq:sameim}
\end{equation}
Let $\bar{V}^{\lambda,j}$ be given by
\begin{equation*}
\begin{bmatrix} v_{1,1}^{\lambda,j} & \cdots & v_{r_1,1}^{\lambda,j}  & \cdots & v_{1,l_{\lambda,j}}^{\lambda,j} & \cdots & v_{r_{l_{\lambda,j}},l_{\lambda,j}}^{i,j} \end{bmatrix}
\end{equation*}
and let $x_0^a(j) = \bar{V}^{\lambda,j}\alpha_j$. Suppose 
\begin{align}
& D^sd_k^a(j) = C^s(j)A(j)^k x_0^a(j) \nonumber \\
 &= \lambda^k \mathcal{V}_{s,1}^{\lambda,j}\alpha_j  + \cdots + \frac{1}{(k-1)!} \frac{d^{k-1}}{d\lambda^{k-1}}(\lambda^k) \mathcal{V}_{s,k}^{\lambda,j} \alpha_j,~k \le r(\lambda) \nonumber \\
  &= \lambda^k \mathcal{V}_{s,1}^{\lambda,j}\alpha_j  + \cdots + \frac{1}{(r(\lambda)-1)!} \frac{d^{r(\lambda)-1}}{d\lambda^{r(\lambda)-1}}(\lambda^k) \mathcal{V}_{s,r(\lambda)}^{\lambda,j} \alpha_j, \nonumber \\
  & ~~~~~~~~~~~~~~~~~~~~~~~~~~~~~~~~~~~~~~~~~~~~~~~k > r(\lambda),  \label{eq:jordan}
\end{align}
where $\mathcal{V}_{s,k}^{\lambda,j}$ is the $k$th row of $\mathcal{V}_{s}^{\lambda,j}$. From \eqref{eq:sameim}, $D^sd_k^a(1) = D^sd_k^a(2)$. Note that if $D^sd_k^a(j)$ is not real, then an attack can be constructed by adding the conjugate so that 
\begin{align*}
D^sd_k^a(j) &= C^s(j)A(j)^k x_0^a(j)  + \overline{ C^s(j)A(j)^k x_0^a(j)} \\
&= C^s(j)A(j)^k(x_0^a(j) + \overline{x_0^a(j)})
\end{align*}
Therefore, 
\begin{equation*}
C^s(1)A(1)^k(x_0^a(1) + \overline{x_0^a(1)}) = C^s(2)A(2)^k(x_0^a(2) + \overline{x_0^a(2)}).
\end{equation*}
The attack can be scaled so it is nonzero for some time $k$ from \eqref{eq:jordan} and \eqref{eq:sameim}. Thus, the result holds.

We now prove the necessary assumption. Without loss of generality, suppose the first $z$ eigenvalues of $\Lambda^1$ and $\Lambda^2$ are the same so that $\lambda_k^1 = \lambda_k^2$ for $k \le z$. Assume the rest of the eigenvalues are different. In particular let  $\Lambda^1 = \{\lambda_1, \cdots, \lambda_{q_1}\}$ and $\Lambda^2 = \{\lambda_1, \cdots, \lambda_{z}, \lambda_{q_1+1}, \cdots, \lambda_{q_1+q_2-z}\}$. Let $r^*(\lambda,j) = \max_i r_i(\lambda,j)$, characterize the maximum block size of eigenvalue $\lambda$ for $A(j)$ and let $\tau+1 \ge 2n$.

Define $G(\lambda_i,j) \in \mathbb{C}^{\tau+1 \times r^*(\lambda_i,j)}$ as 
\begin{equation}
\begin{bmatrix} 1 & 0 & \cdots & 0 \\ \lambda_i & 1 & \cdots & 0 \\ \lambda_i^2 & 2\lambda_i & \cdots & 0 \\ \vdots & \vdots & \cdots & \vdots \\ \lambda_i^{\tau} & \tau\lambda_i^{\tau-1} & \cdots & \frac{1}{(r^*(\lambda_i,j)-1)!} \frac{d^{r^*(\lambda_i,j)-1}}{d {r^*(\lambda_i,j)-1}} (\lambda_i^{\tau}) \end{bmatrix}.
\end{equation}
where the $k$th column is obtained by taking entrywise, the corresponding $(k-1)$ derivative of the associated entry in the first column and dividing by $(k-1)!$.
Let $G_a, G_b, G_c$ be given by
\begin{align*}
G_a &= \begin{bmatrix} G(\lambda_1,1) & G(\lambda_1,2) & \cdots & G(\lambda_z,1) & G(\lambda_z,2) \end{bmatrix},\\
G_b &= \begin{bmatrix} G(\lambda_{z+1},1) & \cdots  & G(\lambda_{q_1},1) \end{bmatrix}, \\
G_c &= \begin{bmatrix} G(\lambda_{q_1+1},2) & \cdots  & G(\lambda_{q_1+q_2-z},2) \end{bmatrix}.
\end{align*}
Finally, let $G^*= \begin{bmatrix} G_a & G_b & G_c \end{bmatrix}$. Note that $G^* \in \mathbb{C}^{\tau+1 \times \kappa}$ where $\kappa \le 2n$ by construction.

Consider vectors $\eta^{i,j} \in \mathbb{C}^{\sum_{k} r_k(\lambda_i,j)}$  and define $\tilde{\mathcal{V}}_{s}^{\lambda_i,j}, \tilde{\mathcal{V}}_{s}^a, \tilde{\mathcal{V}}_{s}^b, \tilde{\mathcal{V}}_{s}^c, \tilde{\mathcal{V}}_s$ as
\begin{align*}
&\tilde{\mathcal{V}}_{s}^{\lambda_i,j} = \begin{bmatrix} \mathcal{V}_{s,1}^{\lambda_i,j} \eta^{i,j} & \cdots & \mathcal{V}_{s,r^*({\lambda_i,j})}^{\lambda_i,j} \eta^{i,j} \end{bmatrix}^T, \\
&\tilde{\mathcal{V}}_{s}^{a} = \begin{bmatrix} (\tilde{\mathcal{V}}_{s}^{\lambda_1,1})^T & (\tilde{\mathcal{V}}_{s}^{\lambda_1,2})^T & \cdots & (\tilde{\mathcal{V}}_{s}^{\lambda_z,1})^T & (\tilde{\mathcal{V}}_{s}^{\lambda_z,2})^T \end{bmatrix}^T, \\
&\tilde{\mathcal{V}}_{s}^{b} = \begin{bmatrix} (\tilde{\mathcal{V}}_{s}^{\lambda_{z+1},1})^T & \cdots & (\tilde{\mathcal{V}}_{s}^{\lambda_{q_1},1})^T \end{bmatrix}^T, \\
&\tilde{\mathcal{V}}_{s}^{c} = \begin{bmatrix} (\tilde{\mathcal{V}}_{s}^{\lambda_{q_1+1},2})^T & \cdots & (\tilde{\mathcal{V}}_{s}^{\lambda_{q_1+q_2-z},2})^T \end{bmatrix}^T, \\
&\tilde{\mathcal{V}}_s = \begin{bmatrix} (\tilde{\mathcal{V}}_{s}^{a})^T & (\tilde{\mathcal{V}}_{s}^{b})^T & (\tilde{\mathcal{V}}_{s}^{c})^T \end{bmatrix}.
\end{align*}
It can be shown (Theorem \ref{sum of nulls}) that an attack exists only if  there exists some nontrivial $\tilde{\mathcal{V}}_s$  in the null space of $G^*$. We observe by construction that $G^*$ has 
$r_{min} = \sum_{i = 1}^{z} \min_{j} r^*(\lambda_i,j)$ pairs of identical columns. Thus, $\mbox{null}(G^*) \ge r_{min}$. Let $\tilde{G^*} \in \mathbb{C}^{\tau+1 \times r_{max}}$ be obtained by deleting duplicate columns of $G^*$ where $r_{max} \le 2n \le \tau + 1$ is given by
\begin{displaymath}
\sum_{i = 1}^{z} \max_{j} r^*(\lambda_i,j)  + \sum_{i = z+1}^{q_1} r^*(\lambda_i,1)  + \sum_{i = q_1+1}^{q_2+q_1-1} r^*(\lambda_i,2).
\end{displaymath}
Let $\tilde{G^*}_{trunc}$ be a square matrix obtained by removing the last $\tau+1 - r_{max}$ rows of $\tilde{G^*}$. 

We first show that the null space  $\tilde{G^*}_{trunc}^T$ is empty. Suppose it was not. This would imply the existence of a complex nonzero polynomial $p^*(x)$ of degree $r_{max}$ with the property
\begin{align*}
p^*(\lambda_k) = 0, \cdots, \frac{d^{(\max_{j} r^*(\lambda_k,j)-1)}p^*}{dx}(\lambda_k) = 0, \
\end{align*}
for $1 \le k \le z$,
\begin{align*}
p^*(\lambda_k) = 0, \frac{dp^*}{dx}(\lambda_k) =0,\cdots, \frac{d^{r^*(\lambda_i,1)-1}p^*}{dx}(\lambda_k) = 0, \
\end{align*}
for $z+1 \le k \le q_1$, and
\begin{align*}
p^*(\lambda_k) = 0, \frac{dp^*}{dx}(\lambda_k) =0,\cdots, \frac{d^{r^*(\lambda_i,2)-1}p^*}{dx}(\lambda_k) = 0, \
\end{align*}
for $q_1+1 \le k \le q_1+q_2-z$.

But this contradicts the fundamental theorem of algebra since it would imply a polynomial of degree $\tau$ has $\tau+1$ zeros. Thus, the null space  $\tilde{G^*}_{trunc}^T$ is empty.

Therefore, by the rank nullity theorem $\tilde{G^*}_{trunc}$ is full rank and therefore $\tilde{G^*}$ is full rank. Consequently, $\mbox{rank}(G^*) \ge r_{max}$. However, $\mbox{null}{(G^*)} \ge r_{min}$ and the number of columns in $G^*$ is $r_{max}+r_{min}$. Therefore, strict equality holds and $\mbox{rank}(G^*) = r_{max}$ and $\mbox{null}{(G^*)} = r_{min}$. As a result, one excites the null space of $G^*$ only by exciting pairs of identical columns in $G^*$.

Therefore, $\tilde{\mathcal{V}}_s$ is in the null space of $G^*$  only if for $1 \le k \le z$ 
\begin{equation*}
\tilde{\mathcal{V}}_{s}^{\lambda_k,1}   = -\tilde{\mathcal{V}}_{s}^{\lambda_k,2} .
\end{equation*}
However, there exists a nonzero attack only if for some $k$ there exists a solution 
\begin{equation*}
\tilde{\mathcal{V}}_{s}^{\lambda_k,1}  = -\tilde{\mathcal{V}}_{s}^{\lambda_k,2}  \neq 0.
\end{equation*}
The result directly follows.
\end{document}